\newcommand{\CC}{\mathcal{C}}
\newcommand{\G}{\mathcal{G}}
\newcommand{\HH}{\mathcal{H}}
\newcommand{\Score}{\textrm{Score}}
\newcommand{\M}{\textrm{maximize}}
\newcommand{\R}{\mathbb{R}}
\newcommand{\PP}{\mathbb{P}}
\newcommand{\union}{\cup}
\newcommand{\I}{\mathcal{I}}
\newcommand{\cc}{\checkmark}
\DeclareMathOperator{\parents}{Pa}
\DeclareMathOperator{\descendants}{De}
\DeclareMathOperator{\nondescendants}{Nd}
\DeclareMathOperator{\ancestors}{An}
\DeclareMathOperator{\children}{Ch}
\DeclareMathOperator*{\argmax}{\arg\!\max}
\DeclareMathOperator*{\pa}{Pa}
 \newcommand\independent{\protect\mathpalette{\protect\independenT}{\perp}}
    \def\independenT#1#2{\mathrel{\rlap{$#1#2$}\mkern2mu{#1#2}}}
\newtheorem{theorem}{Theorem}[section]
\newtheorem{lemma}[theorem]{Lemma}
\theoremstyle{definition}
\newtheorem{definition}[theorem]{Definition}
\newenvironment{ideas}{\par\color{blue}}{\par}
\definecolor{ist green}{RGB}{57,74,19}
\definecolor{kth blue}{RGB}{26,84,166}
\title{Permutation-based Causal Inference Algorithms \\ with Interventions}
\author{
  Yuhao Wang \\
  Laboratory for Information and Decision Systems\\
 and Institute for Data, Systems and Society\\
  Massachusetts Institute of Technology\\
  Cambridge, MA 02139 \\
  \texttt{yuhaow@mit.edu} \\
  \And
  Liam Solus \\
  Department of Mathematics\\
  KTH Royal Institute of Technology\\
  Stockholm, Sweden \\
  \texttt{solus@kth.se} \\
    \And
     Karren Dai Yang \\
Institute for Data, Systems and Society\\
and Broad Institute of MIT and Harvard\\
  Massachusetts Institute of Technology\\
  Cambridge, MA 02139 \\
  \texttt{karren@mit.edu} \\
  \And
  Caroline Uhler \\
  Laboratory for Information and Decision Systems\\
 and Institute for Data, Systems and Society\\
  Massachusetts Institute of Technology\\
  Cambridge, MA 02139 \\
  \texttt{cuhler@mit.edu }
}
\begin{document}

\maketitle

\begin{abstract}
Learning directed acyclic graphs using both observational and interventional data is now a fundamentally important problem due to recent technological developments in genomics that generate such single-cell gene expression data at a very large scale.  
In order to utilize this data for learning gene regulatory networks, efficient and reliable causal inference algorithms are needed that can make use of both observational and interventional data.  
In this paper, we present two algorithms of this type and prove that both are consistent under the faithfulness assumption.  
These algorithms are interventional adaptations of the Greedy SP algorithm and are the first algorithms using both observational and interventional data with consistency guarantees.  
Moreover, these algorithms have the advantage that they are nonparametric, which makes them useful also for analyzing non-Gaussian data.
In this paper, we present these two algorithms and their consistency guarantees, and we analyze their performance on simulated data, protein signaling data, and single-cell gene expression data.

\end{abstract}

\section{Introduction}
\label{sec: introduction}

Discovering causal relations is a fundamental problem across a wide variety of disciplines including computational biology, epidemiology, sociology, and economics \cite{Friedman_2000,Pearl_2000,Robins_2000,Spirtes_2001}.  
DAG models can be used to encode causal relations in terms of a directed acyclic graph (DAG) $\G$, where each node is associated to a random variable and the arrows represent their causal influences on one another.  
The non-arrows of $\G$ encode a collection of conditional independence (CI) relations through the so-called \emph{Markov properties}. 
While DAG models are extraordinarily popular within the aforementioned research fields, it is in general a difficult task to recover the underlying DAG $\G$ from samples from the joint distribution on the nodes.  
In fact, since different DAGs can encode the same set of CI relations, from observational data alone the underlying DAG $\G$ is in general only identifiable up to \emph{Markov equivalence}, and \emph{interventional} data is needed to identify the complete DAG.

In recent years, the new \emph{drop-seq} technology has allowed obtaining high-resolution observational single-cell gene expression data at a very large scale~\cite{drop-seq}. In addition, earlier this year  this technology was combined with the CRISPR/Cas9 system into \emph{perturb-seq}, a technology that allows obtaining high-throughput interventional gene expression data~\cite{perturb-seq}. An imminent question now is how to make use of a combination of observational and interventional data (of the order of 100,000 cells / samples on 20,000 genes / variables) in the causal discovery process. 
Therefore, the development of efficient and consistent algorithms using both observational and interventional data that are implementable within genomics is now a crucial goal.  
This is the purpose of the present paper.  

The remainder of this paper is structured as follows: In Section~\ref{sec: related work} we discuss related work. Then in Section~\ref{sec: preliminaries and related work}, we recall fundamental facts about DAG models and causal inference that we will use in the coming sections.  
In Section~\ref{sec: two permutation-based algorithms}, we present the two algorithms and discuss their consistency guarantees.  
In Section~\ref{sec: evaluation}, we analyze the performance of the two algorithms on both simulated and real datasets.  
We end with a short discussion in Section~\ref{sec: discussion}.

\section{Related Work}
\label{sec: related work}

Causal inference algorithms based on observational data can be classified into three categories: constraint-based, score-based, and hybrid methods. 
\emph{Constraint-based methods}, such as the \emph{PC algorithm}~\cite{Spirtes_2001}, treat causal inference as a constraint satisfaction problem and rely on CI tests to recover the model via its Markov properties. \emph{Score-based methods}, on the other hand, assign a score function such as the Bayesian Information Criterion (BIC) to each DAG and optimize the score via greedy approaches. An example is  the prominent \emph{Greedy Equivalence Search (GES)}~\cite{M97}. \emph{Hybrid methods}
either alternate between score-based and constraint-based updates, as in \emph{Max-Min Hill-Climbing}~\cite{Tsamardinos}, or use score functions based on CI tests, as in the recently introduced \emph{Greedy SP} algorithm~\cite{SWUM17}.

Based on the growing need for efficient and consistent algorithms that accommodate observational and interventional data \cite{perturb-seq}, it is natural to consider extensions of the previously described algorithms that can accommodate interventional data.  
Such options have been considered in~\cite{HB12}, in which the authors propose GIES, an extension of GES that accounts for interventional data.
This algorithm can be viewed as a greedy approach to \emph{$\ell_0$-penalized maximum likelihood estimation} with interventional data, an otherwise computationally infeasible score-based approach. Hence GIES is a parametric approach (relying on Gaussianity) and while it has been applied to real data~\cite{HB12, HB15, Meinshausen}, we will demonstrate via an example in Section~\ref{sec: preliminaries and related work} that it is in general not consistent.
In this paper, we assume causal sufficiency, i.e., that there are no latent confounders in the data-generating DAG. In addition, we assume that the interventional targets are known. 
Methods such as ACI~\cite{MCM16}, HEJ~\cite{HEJ14}, COmbINE~\cite{TT15} and ICP~\cite{Meinshausen} allow for latent confounders with possibly unknown interventional targets. In addition, other methods have been developed specifically for the analysis of gene expression data~\cite{RJN13}. 
A comparison of the method presented here and some of these methods in the context of gene expression data is given in the Supplementary Material. 

The main purpose of this paper is to provide the first algorithms (apart from enumerating all DAGs) for causal inference based on observational and interventional data with consistency guarantees. These algorithms are adaptations of the Greedy SP algorithm~\cite{SWUM17}. As compared to GIES,  another advantage of these algorithms is that they are nonparametric and hence do not assume Gaussianity, a feature that is crucial for applications to gene expression data which is inherently non-Gaussian.
\section{Preliminaries}
\label{sec: preliminaries and related work}


{\bf DAG models.}
Given a DAG $\G = ([p], A)$ with node set $[p]:=\{1,\ldots,p\}$ and a collection of arrows $A$, we associate the nodes of $\G$ to a random vector $(X_1,\ldots,X_p)$ with joint probability distribution $\PP$.  
For a subset of nodes $S\subset[p]$, we let $\parents_\G(S)$, $\ancestors_\G(S)$, $\children_\G(S)$, $\descendants_\G(S)$, and $\nondescendants_\G(S)$, denote the \emph{parents, ancestors, children, descendants}, and \emph{nondescendants} of $S$ in $\G$.  
Here, we use the typical graph theoretical definitions of these terms as given in \cite{Lauritzen_1996}.  
By the Markov property, the collection of non-arrows of $\G$ encode a set of CI relations
$
X_i \independent X_{\nondescendants(i)\backslash\parents(i)} \,\mid\, X_{\parents(i)}.
$
A distribution $\PP$ is said to satisfy the \emph{Markov assumption} (a.k.a.~be \emph{Markov}) with respect to $\mathcal{G}$ if it entails these CI relations.
A fundamental result about DAG models is that the complete set of CI relations implied by the Markov assumption for $\G$ is given by the \emph{$d$-separation} relations in $\G$~\cite[Section~3.2.2]{Lauritzen_1996}; i.e., $\PP$ satisfies the Markov assumption with respect to $\G$ if and only if  $X_A\independent X_B \,\mid\, X_C$ in $\PP$ whenever $A$ and $B$ are $d$-separated in $\G$ given $C$. 
The \emph{faithfulness assumption} is the assertion that the only CI relations entailed by $\PP$ are those implied by $d$-separation in $\G$.   

Two DAGs $\G$ and $\HH$ with the same set of $d$-separation statements are called \emph{Markov equivalent}, and the complete set of DAGs that are Markov equivalent to $\G$ is called its \emph{Markov equivalence class} (MEC), denoted $[\G]$.  
The MEC of $\G$ is represented combinatorially by a partially directed graph $\widehat{\G} := ([p],D,E)$, called its \emph{CP-DAG} or \emph{essential graph} \cite{AMP97}.  
The arrows $D$ are precisely those arrows in $\G$ that have the same orientation in all members of $[\G]$, and the edges $E$ represent those arrows that change direction between distinct members of the MEC.  
In \cite{C95}, the authors give a transformational characterization of the members of $[\G]$.  
An arrow $i\rightarrow j$ in $\G$ is called a \emph{covered arrow} if $\parents_\G(j) = \parents_\G(i)\cup\{i\}$.  
Two DAGs $\G$ and $\HH$ are Markov equivalent if and only if there exists a sequence of covered arrow reversals transforming $\G$ into $\HH$ \cite{C95}.  
This transformational characterization plays a fundamental role in GES~\cite{M97}, GIES~\cite{HB12}, Greedy SP~\cite{SWUM17}, as well as the algorithms we introduce in this paper.  

{\bf Learning from Interventions.}
In this paper, we consider multiple \emph{interventions}.  
Given an ordered list of subsets of $[p]$ denoted by $\I := \{I_1,I_2,\ldots, I_K\}$, for each $I_j$ we generate an \emph{interventional distribution}, denoted $\PP^j$, by forcing the random variables $X_i$ for $i\in I_j$ to the value of some independent random variables. 
We assume throughout that $I_j=\emptyset$ for some $j$, i.e., that we have access to a combination of observational and interventional data. 
If $\PP$ is Markov with respect to $\G = ([p],A)$, then the \emph{intervention DAG of $I_j$} is the subDAG $\G^j := ([p],A^j)$ where $A^j = \{(i,j)\in A : j\notin I_j\}$; i.e., $\G^j$ is given by removing the incoming arrows to all intervened nodes in $\G$.  
Notice that $\PP^j$ is always Markov with respect to $\G^j$.  
This fact allows us to naturally extend the notions of Markov equivalence and essential graphs to the interventional setting, as described in \cite{HB12}.  
Two DAGs $\G$ and $\HH$ are \emph{$\I$-Markov equivalent} for the collection of interventions $\I$ if they have the same skeleton and the same set of immoralities, and if $\G^j$ and $\HH^j$ have the same skeleton for all $j = 1,\ldots,K$ \cite[Theorem 10]{HB12}. Hence, any two $\I$-Markov equivalent DAGs lie in the same MEC. The \emph{$\I$-Markov equivalence class} ($\I$-MEC) of $\G$ is denoted $[\G]_\I$.  
The \emph{$\I$-essential graph} of $\G$ is the partially directed graph $\widehat{\G}_\I := \left([p], \cup_{j = 1}^KD^j,\cup_{j = 1}^KE^j\right)$, where $\widehat{\G}^j =([p],D^j,E^j)$.  
The arrows of $\widehat{\G}_\I$ are called \emph{$\I$-essential arrows} of $\G$.

%
{\bf Greedy Interventional Equivalence Search (GIES).} GIES is a three-phase score-based algorithm:  
In the \emph{forward phase}, GIES initializes with an empty $\I$-essential graph $\widehat{\G}_0$.  
Then it sequentially steps from one $\I$-essential graph $\widehat{\G}_i$ to a larger one $\widehat{\G}_{i+1}$ given by adding a single arrow to $\widehat{\G}_i$.  
In the \emph{backward phase}, it steps from one essential graph $\widehat{\G}_i$ to a smaller one $\widehat{\G}_{i+1}$ containing precisely one less arrow than $\widehat{\G}_i$.  
In the \emph{turning phase}, the algorithm reverses the direction of arrows.  
It first considers reversals of non-$\I$-essential arrows and then the reversal of $\I$-essential arrows, allowing it to move between $\I$-MECs.
At each step in all phases the maximal scoring candidate is chosen, and the phase is only terminated when no higher-scoring $\I$-essential graph exists.  
GIES repeatedly executes the forward, backward, and turning phases, in that order, until no higher-scoring $\I$-essential graph can be found.  
It is amenable to any score that is constant on an $I$-MEC, such as the BIC. 

	\begin{figure}
	\centering
	\includegraphics[height = 1.5in]{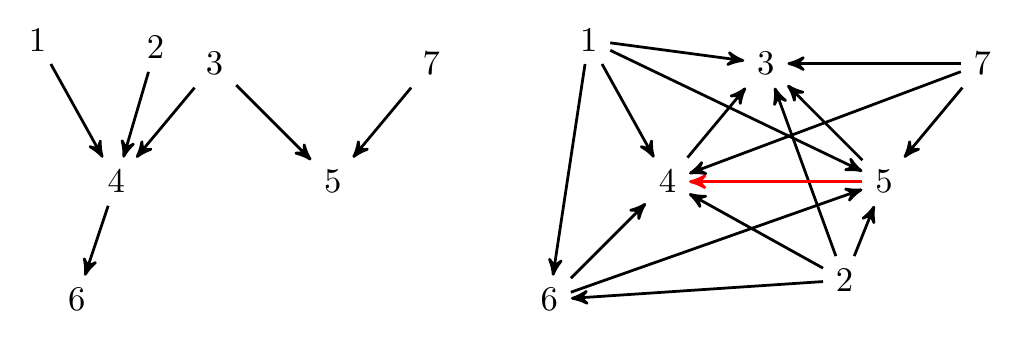}
	\caption{A generating DAG (left) and its GIES local maxima (right) for which GIES is not consistent.}
	\label{fig: counterexample}
	\end{figure}
The question whether GIES is consistent, was left open in~\cite{HB12}. We now prove that GIES is in general not consistent; i.e., if $n_j$ i.i.d.~samples are drawn from the interventional distribution $\PP^j$, then even as $n_1+\cdots+n_K\to\infty$ and under the faithfulness assumption, GIES may not recover the optimal $\I$-MEC with probability $1$.  
Consider the data-generating DAG depicted on the left in Figure~\ref{fig: counterexample}.  
Suppose we take interventions $\I$ consisting of $I_1 = \emptyset,I_2 = \{4\}, I_3 = \{5\}$, and that GIES arrives at the DAG $\G$ depicted on the right in Figure~\ref{fig: counterexample}.  
If the data collected grows as $n_1 = Cn_2 = Cn_3$ for some constant $C>1$, then we can show that the BIC score of $\G$ is a local maximum with probability $\frac{1}{2}$ as $n_1$ tends to infinity.  
The proof of this fact relies on the observation that GIES must initialize the turning phase at $\G$, and that $\G$ contains precisely one covered arrow $5\rightarrow 4$, which is colored red in Figure~\ref{fig: counterexample}.  
The full proof is given in the Supplementary Material.  

{\bf Greedy SP.}
In this paper we adapt the hybrid algorithm Greedy SP to provide consistent algorithms that use both interventional and observational data.  Greedy SP is a permutation-based algorithm that associates a  DAG to every permutation of the random variables and greedily updates the DAG by transposing elements of the permutation.  
More precisely, given a set of observed CI relations $\CC$ and a permutation $\pi = \pi_1\cdots\pi_p$, the Greedy SP algorithm assigns a DAG $\G_\pi := ([p],A_\pi)$ to $\pi$ via the rule
$$
\pi_i\rightarrow\pi_j\in A_\pi \quad \Longleftrightarrow \quad i<j \mbox{ and } \pi_i\not\independent \pi_j \, \mid \, \{\pi_1,\ldots,\pi_{\max(i,j)}\}\backslash\{\pi_i,\pi_j\},
$$
for all $1\leq i<j\leq p$.
The DAG $\G_\pi$ is a \emph{minimal I-MAP} (independence map) with respect to $\CC$, since any DAG $\mathcal{G}_\pi$ is Markov with respect to $\CC$ and any proper subDAG of $\mathcal{G}_\pi$ encodes a CI relation that is not in $\CC$~\cite{Pearl_1988}. 
Using a depth-first search approach, the algorithm reverses covered edges in $\G_\pi$, takes a linear extension $\tau$ of the resulting DAG and re-evaluates against $\CC$ to see if $\G_\tau$ has fewer arrows than $\G_\pi$.  
If so, the algorithm reinitializes at $\tau$, and repeats this process until no sparser DAG can be recovered.  
In the observational setting, Greedy SP is known to be consistent whenever the data-generating distribution is faithful to the sparsest DAG \cite{SWUM17}.  

\section{Two Permutation-Based Algorithms with Interventions} 
\label{sec: two permutation-based algorithms}


We now introduce our two interventional adaptations of Greedy SP and prove that they are consistent under the faithfulness assumption. 
In the first algorithm, presented in Algorithm~\ref{alg: yuhaos bic}, we use the same moves as Greedy SP, but we optimize with respect to a new score function that utilizes interventional data, namely the sum of the interventional BIC scores.  To be more precise,
for a collection of interventions $\I = \{I_1,\ldots,I_K\}$, the new score function is
\begin{equation*}
\label{eq:score}
\Score(\G) := \sum_{k = 1}^K \left( \underset{(A, \Omega) \in \G^k}{\M}\; \ell_{k} \left( \hat{X}^k; A, \Omega \right) \right) - \sum_{k=1}^K \lambda_{n_k} \vert \G^k \vert, 
\end{equation*}
where $\ell_k$ denotes the log-likelihood of the interventional distribution $\PP^k$, $(A,\Omega)$ are any parameters consistent with $\G^k$, $|\G|$ denotes the number of arrows in $\G$, and $\lambda_{n_k}  = \frac{\log n_k}{n_k}$.
\begin{algorithm}[t!]
\caption{}
\label{alg: yuhaos bic}
\KwInput{Observations $\hat{X}$, an initial permutation $\pi_0$, a threshold $\delta_n > \sum_{k=1}^K\lambda_{n_k}$, and a set of interventional targets $\I = \{I_1,\ldots,I_K\}$.}
\KwOutput{A permutation $\pi$ and its minimal I-MAP $\G_{\pi}$.}
\BlankLine
    Set $\G_\pi := \underset{\G\, \text{consistent with}\, \pi}{\argmax} \Score(\G)$\;
    Using a depth-first search approach with root $\pi$, search for a permutation $\pi_s$ with 
    $
    \Score(\G_{\pi_s}) > \Score(\G_\pi)
    $
    that is connected to $\pi$ through a sequence of permutations 
    $$
    \pi_0 = \pi,\pi_1, \cdots, \pi_{s-1},\pi_s,
    $$ 
    where each permutation $\pi_k$ is produced from $\pi_{k-1}$ by a transposition that corresponds to a covered edge in $\G_{\pi_{k-1}}$ such that 
    $
    \Score(\G_{\pi_k})  > \Score(\G_{\pi_{k-1}}) - \delta_n.
    $
    If no such $\G_{\pi_s}$ exists, return $\pi$ and $\G_\pi$; else set $\pi :=\pi_s$ and repeat.
\end{algorithm}

When Algorithm~\ref{alg: yuhaos bic} has access to observational and interventional data, then uniform consistency follows using similar techniques to those used to prove uniform consistency of Greedy SP in~\cite{SWUM17}. A full proof of the following consistency result for Algorithm~\ref{alg: yuhaos bic} is given in the Supplementary Material.
\begin{theorem}
\label{thm: uniform consistency of yuhaos bic}
Suppose $\PP$ is Markov with respect to an unknown I-MAP $\G_{\pi^\ast}$.  
Suppose also that observational and interventional data are drawn from $\PP$ for a collection of interventional targets $\I = \{I_1 := \emptyset, I_2,\ldots, I_K\}$.   
If $\PP^k$ is faithful to $(\G_{\pi^\ast})^k$ for all $k\in[K]$, then Algorithm~\ref{alg: yuhaos bic} returns the $\I$-MEC of the data-generating DAG $\G_{\pi^\ast}$ almost surely as $n_k \to \infty$ for all $k \in [K]$.  
\end{theorem}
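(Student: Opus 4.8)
The plan is to mirror the consistency proof of Greedy SP in~\cite{SWUM17}: first analyze an oracle version of Algorithm~\ref{alg: yuhaos bic} in which $\Score$ is replaced by its large-sample limit, and then transfer the conclusion back to the finite-sample algorithm. For a DAG $\G$ and an intervention $I_k$, set $\ell_k^\ast(\G^k) := \max_{(A,\Omega)\in\G^k}\ell_k(\hat X^k;A,\Omega)$. Since $\lambda_{n_k}=\log n_k/n_k\to 0$ while $n_k\lambda_{n_k}=\log n_k\to\infty$, the standard consistency of the BIC for model selection (under the regularity conditions for which it holds) gives, for each $k$, that $\tfrac{1}{n_k}\ell_k^\ast(\G^k)$ converges almost surely to the same constant for every $\G^k$ to which $\PP^k$ is Markov, to a strictly smaller constant otherwise, and that $\lambda_{n_k}$ shrinks slowly enough to separate these two cases. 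Because $[p]$ is finite there are only finitely many DAGs, so almost surely for all large enough $n_1,\dots,n_K$ every strict inequality and every $\delta_n$-comparison evaluated by Algorithm~\ref{alg: yuhaos bic} has the same truth value as the corresponding comparison under the oracle score $\Score^\ast$, where $\Score^\ast(\G):=-\sum_{k=1}^K\lambda_{n_k}|\G^k|$ if $\PP^k$ is Markov to $\G^k$ for all $k$ and $\Score^\ast(\G):=-\infty$ otherwise. It therefore suffices to prove that the oracle algorithm returns $[\G_{\pi^\ast}]_\I$.

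Two facts drive the oracle analysis. First, the minimal I-MAP assigned to $\pi_0$ in the first line, and every graph visited thereafter, is a minimal I-MAP of $\PP$, so $\PP^k$ is always Markov to the relevant $\G^k$ and only the penalty $\sum_k\lambda_{n_k}|\G^k|$ governs the ordering. Among all I-MAPs $\G$ of $\PP$, this penalty is minimized exactly on $[\G_{\pi^\ast}]_\I$: indeed $|\G^1|=|\G|\ge|\G_{\pi^\ast}|$ with equality iff $\G\in[\G_{\pi^\ast}]$ (as $\G_{\pi^\ast}$ is sparsest under faithfulness), and for $k\ge 2$ the faithfulness of $\PP^k$ to $(\G_{\pi^\ast})^k$ forces $|\G^k|\ge|(\G_{\pi^\ast})^k|$, with simultaneous equality across all $k$ characterizing, via Theorem~10 of~\cite{HB12}, precisely the DAGs that are $\I$-Markov equivalent to $\G_{\pi^\ast}$. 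Second, reversing a covered arrow in a minimal I-MAP, taking a linear extension, and re-minimizing again yields a minimal I-MAP and changes $\sum_k\lambda_{n_k}|\G^k|$ by at most $\sum_k\lambda_{n_k}$, because reversing a single arrow changes each $|\G^k|$ by at most one and deleting arrows during re-minimization never increases any $|\G^k|$. Since $\delta_n>\sum_k\lambda_{n_k}$, every covered-arrow-reversal move is admissible along the depth-first search.

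It remains to show that from any minimal I-MAP $\G_\pi\notin[\G_{\pi^\ast}]_\I$ the oracle search reaches a strictly higher-scoring minimal I-MAP. If $\G_\pi\notin[\G_{\pi^\ast}]$, then $|\G_\pi|>|\G_{\pi^\ast}|$, and the connectivity lemma underlying observational Greedy SP~\cite{SWUM17} --- itself a consequence of Chickering's covered-arrow-reversal characterization~\cite{C95} --- provides a sequence of covered-arrow reversals out of $\G_\pi$ ending at a minimal I-MAP $\G_\tau$ with $|\G_\tau|<|\G_\pi|$; every step is admissible, and decreasing $|\G^1|=|\G|$ strictly raises $\Score^\ast$, so the search moves to $\G_\tau$. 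If $\G_\pi\in[\G_{\pi^\ast}]\setminus[\G_{\pi^\ast}]_\I$, then $\sum_k|\G_\pi^k|>\sum_k|(\G_{\pi^\ast})^k|$; by Chickering's theorem $\G_\pi$ is joined to every member of $[\G_{\pi^\ast}]$, in particular to some $\G_\tau\in[\G_{\pi^\ast}]_\I$, by covered-arrow reversals, all admissible, with $\Score^\ast(\G_\tau)>\Score^\ast(\G_\pi)$. Iterating, the oracle algorithm terminates at a DAG of $[\G_{\pi^\ast}]_\I$; since no I-MAP of $\PP$ has strictly larger $\Score^\ast$, it then halts, and by the almost-sure agreement of $\Score$ with $\Score^\ast$ over the finitely many relevant comparisons the finite-sample algorithm follows the same trajectory for all large enough $n_k$, returning $[\G_{\pi^\ast}]_\I$.

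The main obstacle is the second connectivity case: showing that the $\delta_n$-tolerant depth-first search cannot get stuck at a minimal I-MAP lying in the correct MEC but outside the correct $\I$-MEC, i.e.\ ruling out ``local maxima'' of $\sum_k|\G^k|$ over $[\G_{\pi^\ast}]$ under single covered-arrow-reversal moves. The resolution exploits that the search tolerates per-step score drops below $\delta_n$ and that Chickering's characterization makes the entire MEC reachable by covered reversals; carrying this out rigorously --- verifying the ``changes by at most one'' claim in every intervened sub-DAG, and controlling the finite-sample fluctuations of $\ell_k^\ast$ uniformly over the finite search space so that the oracle reduction is valid --- is the technical heart of the argument. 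Note that no balance condition on the $n_k$ is required; this is exactly what the choice $\lambda_{n_k}=\log n_k/n_k$ together with $\delta_n>\sum_k\lambda_{n_k}$ secures, in contrast with the GIES counterexample, whose failure is driven by the relative growth rates $n_1=Cn_2=Cn_3$.
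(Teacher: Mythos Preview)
Your oracle reduction rests on the claim that whenever $\G_\pi$ is a minimal I-MAP of $\PP$, each interventional subgraph $(\G_\pi)^k$ is an I-MAP of $\PP^k$, so that the maximized log-likelihoods cancel and only the penalties $\sum_k\lambda_{n_k}|\G^k|$ matter. This implication is false. Take $\G_{\pi^\ast}$ to be $1\leftarrow 2\to 3$ and $\pi=(1,2,3)$; the minimal I-MAP is $\G_\pi=1\to 2\to 3$. With $I=\{2\}$ one has $(\G_\pi)^I=\{2\to 3\}$, which encodes $1\independent 2$, whereas in $\PP^I$ node $1$ is still a structural child of $2$, so $1\not\independent 2$. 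Thus $\PP^I$ is \emph{not} Markov to $(\G_\pi)^I$, and moreover $|(\G_\pi)^I|=1<2=|(\G_{\pi^\ast})^I|$, refuting both your ``only the penalty governs the ordering'' step and your inequality $|\G^k|\ge|(\G_{\pi^\ast})^k|$. Consequently, the log-likelihood contribution to $\Score$ can shift by an order-one amount under a single covered-arrow reversal, and your blanket assertion that every such reversal drops the score by at most $\sum_k\lambda_{n_k}<\delta_n$ is unjustified. The same issue undermines your characterization of the global optimum: two DAGs in $[\G_{\pi^\ast}]$ lying in distinct $\I$-MECs can have identical $\sum_k|\G^k|$ (e.g.\ reverse a covered edge $i\to j$ with both $\{i\}$ and $\{j\}$ among the targets), so the penalty alone cannot separate $[\G_{\pi^\ast}]_\I$ from the rest of the MEC; the separation comes precisely from the fit term you discarded.

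The paper's proof confronts this directly rather than reducing to penalties. It first establishes a structural lemma (Lemma~C.2): along the specific Chickering sequence of minimal I-MAPs from $\G_\pi$ down to $\G_{\pi^\ast}$, whenever the covered arrow $i\to j$ is reversed, every $d$-connecting path between $i$ and $j$ given $\parents(i)$ in $\G_{\pi^\ast}$ points into $i$. This is exactly what forces $\hat\rho_{i,j\mid S}^{k}\to 0$ for every $k\in\I_{i\setminus j}$, so that in the score-difference formula obtained from \cite[Lemma~5.1]{NHM15} the potentially harmful log-likelihood terms vanish asymptotically and the remaining terms are bounded below by $-\delta_n$. In other words, admissibility of the moves is \emph{not} a consequence of a universal one-edge bookkeeping bound; it holds only along the Chickering path, and only because of this $d$-connection lemma. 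Finally, strict improvement at a DAG outside $[\G_{\pi^\ast}]_\I$ is argued through the fit term: by \cite[Theorem~10]{HB12} some $(\G_{\pi^{k-1}})^t$ has a different skeleton from $(\G_{\pi^\ast})^t$, and faithfulness of $\PP^t$ to $(\G_{\pi^\ast})^t$ then yields $\Score(\G_{\pi^\ast})>\Score(\G_{\pi^{k-1}})$. Your sketch would be repaired by inserting this $d$-connecting-paths lemma and using it to control the likelihood terms along the Chickering sequence, rather than asserting that all covered reversals are admissible.
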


A problematic feature of Algorithm~\ref{alg: yuhaos bic} from a computational perspective is the  the slack parameter~$\delta_n$.  
In fact, if this parameter were not included, then Algorithm~\ref{alg: yuhaos bic} would not be consistent.  
This can be seen via an application of Algorithm~\ref{alg: yuhaos bic} to the example depicted in Figure~\ref{fig: counterexample}.  
Using the same set-up as the inconsistency example for GIES, suppose that the left-most DAG $\G$ in Figure~\ref{fig: counterexample} is the data generating DAG, and that we draw $n_k$ i.i.d.~samples from the interventional distribution $\PP^k$ for the collection of targets $\I = \{\I_1 = \emptyset,\I_2 = \{4\}, \I_3 = \{5\}\}$.  
Suppose also that $n_1 = Cn_2  = Cn_3$ for some constant $C>1$, and now additionally assume that we initialize Algorithm~\ref{alg: yuhaos bic} at the permutation 
$
\pi = 1276543.
$
Then the minimal I-MAP $\G_\pi$ is precisely the DAG presented on the right in Figure~\ref{fig: counterexample}.  
This DAG contains one covered arrow, namely $5 \rightarrow 4$. Reversing it produces the minimal I-MAP $\G_\tau$ for 
$
\tau = 1276453.
$
Computing the score difference $\Score(\G_\tau) - \Score(\G_\pi)$ using \cite[Lemma~5.1]{NHM15} shows that as $n_1$ tends to infinity, $\Score(\G_\tau) < \Score(\G_\pi)$ with probability $\frac{1}{2}$. Hence, Algorithm~\ref{alg: yuhaos bic} would not be consistent without the slack parameter $\delta_n$. This calculation can be found in the Supplementary Material. 

Our second interventional adaptation of the Greedy SP algorithm, presented in Algorithm~\ref{alg: interventional greedy sp}, leaves the score function the same (i.e., the number of edges of the minimal I-MAP), but restricts the possible covered arrow reversals that can be queried at each step.  
In order to describe this restricted set of moves we provide the following definitions.
\begin{definition}
\label{def:covered}
Let $\I = \{I_1,\ldots,I_K\}$ be a collection of interventions, and for $i,j\in[p]$ define the collection of indices
$$
\I_{i \setminus j} := \lbrace k\in[K] : i \in I_k\ \textrm{ and }\ j \not\in I_k \rbrace. 
$$
For a minimal I-MAP $\G_\pi$ we say that a covered arrow $i\rightarrow j\in \G_\pi$ is \emph{$\I$-covered} if
\begin{align*}
\I_{i \setminus j} = \emptyset 
\quad
\textrm{ or }\quad i\rightarrow j \not\in (\G^k)_\pi \quad\textrm{ for all } \,k \in \I_{i \setminus j}.
\end{align*}
\end{definition}
\begin{definition}
\label{def: I-contradicting arrows}
We say that an arrow $i\rightarrow j \in \G_\pi$\, is\, \emph{$\I$-contradicting} if the following three conditions hold: 
(a)\,  $\I_{i \setminus j} \cup \I_{j \setminus i} \neq \emptyset$, \, (b)\, $\I_{i \setminus j} = \emptyset\, \textrm{ or }\, i \independent j \, \textrm{ in distribution } \,\PP^k\, \textrm{ for all } \,k \in \I_{i \setminus j}$, \,(c)\, $\I_{j \setminus i} = \emptyset\, \textrm{ or there exists }  \,k \in \I_{j \setminus i} \,\textrm{ such that }\, i \not\independent j \, \textrm{ in distribution } \,\PP^k$. 
\end{definition}
\begin{algorithm}[t!]
\caption{Interventional Greedy SP (IGSP)}
\label{alg: interventional greedy sp}
\KwInput{A collection of interventional targets $\I = \{\I_1,\ldots, \I_K\}$ and a starting permutation $\pi_0$.}
\KwOutput{A permutation $\pi$ and its minimal I-MAP $\G_\pi$.}
\BlankLine
    Set $\G := \G_{\pi_0}$\;
    Using a depth-first-search approach with root $\pi$, search for a minimal I-MAP $\G_\tau$ with $\vert \G \vert > \vert \G_\tau \vert$ that is connected to $\G$ by a list of $\I$-covered edge reversals. Along the search, prioritize the $\I$-covered edges that are also $\I$-contradicting edges. If such $\G_\tau$ exists, set $\G := \G_\tau$, update the number of $\I$-contradicting edges,  and repeat this step. If not, output $G_\tau$ with $\vert \G \vert = \vert \G_\tau \vert$ that is connected to $\G$ by a list of $\I$-covered edges and minimizes the number of $\I$-contradicting edges.
\end{algorithm}

In the observational setting, GES and Greedy SP utilize covered arrow reversals to transition between members of a single MEC as well as between MECs~\cite{C95,C02,SWUM17}.  
Since an $\I$-MEC is characterized by the skeleta and immoralities of each of its interventional DAGs, $\I$-covered arrows represent the natural candidate for analogous transitionary moves between $\I$-MECs in the interventional setting.  
It is possible that reversing an $\I$-covered edge $i\rightarrow j$ in a minimal I-MAP $\G_\pi$ results in a new minimal I-MAP $\G_\tau$ that is in the same $\I$-MEC as $\G_\pi$.  
Namely, this happens when $i\rightarrow j$ is a non-$\I$-essential edge in $\G_\pi$.  
Similar to Greedy SP, Algorithm~\ref{alg: interventional greedy sp} implements a depth-first-search approach that allows for such $\I$-covered arrow reversals, but it prioritizes those $\I$-covered arrow reversals that produce a minimal I-MAP $\G_\tau$ that is not $\I$-Markov equivalent to $\G_\pi$. 
These arrows are $\I$-contradicting arrows.  
The result of this refined search via $\I$-covered arrow reversal is an algorithm that is consistent under the faithfulness assumption. 
\begin{theorem}
\label{thm: consistency of interventional greedy sp}
Algorithm~\ref{alg: interventional greedy sp} is consistent under the faithfulness assumption.  
\end{theorem}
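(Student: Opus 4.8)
The plan is to follow the template of the consistency proofs for GES and Greedy SP: first reduce to the population (oracle) regime using faithfulness, then carry out a combinatorial argument about which minimal I-MAPs the oracle algorithm can visit. I read the faithfulness hypothesis as in Theorem~\ref{thm: uniform consistency of yuhaos bic}: $\PP$ is faithful to the data-generating DAG $\G^\ast$ and each $\PP^k$ is faithful to $(\G^\ast)^k$. Under this hypothesis the consistent CI tests used to form the minimal I-MAPs $\G_\pi$ and to evaluate Definitions~\ref{def:covered} and~\ref{def: I-contradicting arrows} return their population values almost surely as all $n_k\to\infty$, so it suffices to prove that the oracle version of Algorithm~\ref{alg: interventional greedy sp}, started from any permutation $\pi_0$, terminates at a DAG in $[\G^\ast]_\I$. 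This reduction is carried out exactly as in \cite{SWUM17}.

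The combinatorial argument splits into two phases. \emph{Phase 1 (reaching the right Markov equivalence class).} I would show that the depth-first search over $\I$-covered reversals reaches a sparsest minimal I-MAP and that every sparsest minimal I-MAP lies in $[\G^\ast]$. The second statement is inherited from the observational theory: $\G_\pi$ and its edge count depend only on the observational CI relations, and faithfulness forces the minimal I-MAPs of smallest size to be exactly the members of $[\G^\ast]$ \cite{SWUM17}. For the first statement I would reuse the reduction argument underlying Greedy SP: if the current minimal I-MAP $\G_\pi\notin[\G^\ast]$ then a sparser minimal I-MAP exists, and one can produce a chain of covered-edge reversals (with relinearization) that strictly lowers the edge count. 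The extra ingredient is that, under faithfulness, the covered edges one is forced to reverse along such a chain are $\I$-covered in the sense of Definition~\ref{def:covered}: such an edge $i\to j$ is either not an edge of $\G^\ast$, or is an edge of $\G^\ast$ oriented against $\G_\pi$, and in either case, for every $k$ with $i\in I_k$, one verifies that $i$ and $j$ are separated in $\PP^k$ by the conditioning set appearing in the definition of $(\G^k)_\pi$ (the incoming arrow at $i$, if any, has been removed by the intervention), so $i\to j\notin(\G^k)_\pi$. Thus Phase 1 terminates at some $\G_1\in[\G^\ast]$.

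\emph{Phase 2 (reaching the right $\I$-Markov equivalence class).} From $\G_1\in[\G^\ast]$, the final step of the algorithm returns a minimal I-MAP of the same (minimum) size that is connected to $\G_1$ by $\I$-covered reversals and minimizes the number of $\I$-contradicting arrows; I claim this output lies in $[\G^\ast]_\I$. The structural lemma is: for $\G\in[\G^\ast]$, $\G\in[\G^\ast]_\I$ if and only if $\G$ has no $\I$-contradicting arrow. Indeed, by \cite[Theorem~10]{HB12} two Markov-equivalent DAGs fail to be $\I$-Markov equivalent exactly when they disagree on the orientation of some edge $a-b$ for which some $I_k$ contains exactly one of $a,b$; and by faithfulness an arrow $i\to j$ of $\G$ satisfies conditions (a)--(c) of Definition~\ref{def: I-contradicting arrows} if and only if $\G^\ast$ contains $j\to i$ and (a) holds (if $i\in I_k$, $j\notin I_k$, then the arrow $j\to i$ is deleted in $(\G^\ast)^k$, so $i,j$ are marginally independent in $\PP^k$, giving (b); if $j\in I_k$, $i\notin I_k$, then $j\to i$ survives, so $i,j$ stay adjacent, giving (c); and if instead $\G^\ast$ contains $i\to j$ then for every intervention witnessing (a) the analogous computation shows (b) or (c) fails). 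Hence the $\I$-contradicting arrows of $\G$ are exactly its orientation-disagreements with $\G^\ast$ that are ``visible'' to some intervention, and their number is $0$ precisely on $[\G^\ast]_\I$. Finally, $\G^\ast$ is reachable from $\G_1$ by $\I$-covered reversals: along the covered-edge-reversal path of \cite{C95,C02} from $\G_1$ to $\G^\ast$ on which the number of edges disagreeing with $\G^\ast$ strictly decreases, every reversed edge is oriented against $\G^\ast$ and hence, by the argument of Phase 1, is $\I$-covered. Since the $\I$-contradicting priority only affects the order of the search and not the set of reachable minimal I-MAPs, the algorithm outputs a DAG with no $\I$-contradicting arrow, that is, a representative of $[\G^\ast]_\I$.

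The step I expect to be the main obstacle is the Phase 1 claim that restricting to $\I$-covered reversals does not prevent the search from reaching a sparsest minimal I-MAP --- concretely, that the Greedy SP reduction chain can be chosen so that every reversed edge is spurious or anti-oriented relative to $\G^\ast$, hence automatically $\I$-covered. This is the interventional counterpart of the transformational characterization of Markov equivalence classes that underpins GES and Greedy SP, and it is where the combinatorics of covered-edge reversals must be married to the probabilistic content of faithfulness for the interventional distributions. By comparison, the characterization of $\I$-contradicting arrows in Phase 2 and the reduction to the oracle regime are routine.
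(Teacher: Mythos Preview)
Your two-phase decomposition matches the paper's exactly: the paper also first shows (its Theorem~D.1) that from any $\G_\pi$ one can reach the true MEC along $\I$-covered reversals, and then (its Theorem~D.2) that within the true MEC the number of $\I$-contradicting arrows strictly decreases to zero along such a path. Your Phase~2 characterization of $\I$-contradicting arrows (for $\G\in[\G^\ast]$, an arrow $i\to j$ is $\I$-contradicting iff $j\to i\in\G^\ast$ and condition~(a) holds) is correct and essentially the paper's argument; the marginal (in)dependence claims follow because after intervening on $i$ the node $i$ has no parents in $(\G^\ast)^k$, so any $d$-connecting path from $i$ with empty conditioning set would have to be directed out of $i$, which is impossible by acyclicity when $j\to i\in\G^\ast$.

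The genuine gap is the step you yourself flag as the main obstacle, and your proposed resolution does not close it. You argue that a covered edge $i\to j$ reversed along the Greedy~SP chain is ``spurious or anti-oriented relative to $\G^\ast$'' and that therefore, for $k\in\I_{i\setminus j}$, ``$i$ and $j$ are separated in $\PP^k$ by the conditioning set appearing in the definition of $(\G^k)_\pi$ (the incoming arrow at $i$, if any, has been removed by the intervention).'' But removing the \emph{direct} arrow $j\to i$ (if present) is not enough: there may be other $d$-connecting paths between $i$ and $j$ in $\G^\ast$ given $\parents_{\G_{\pi^{m-1}}}(i)$, and unless every such path enters $i$ through an incoming arrow, the intervention on $i$ will not sever the dependence, so $i\to j$ could still appear in $(\G^k)_\pi$ and fail to be $\I$-covered. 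The paper supplies exactly this missing fact as a separate lemma (its Lemma~C.2): along the particular Chickering-type sequence of minimal I-MAPs from $\G_\pi$ down to $\G_{\pi^\ast}$ (resolving one sink at a time), whenever $i\to j$ is the covered edge being reversed, \emph{every} $d$-connecting path from $j$ to $i$ in $\G_{\pi^\ast}$ given $\parents_{\G_{\pi^{m-1}}}(i)$ has its arrow incident to $i$ pointing toward $i$. This is proved by an induction over the sink-resolution steps (using an auxiliary equivalence between conditioning on $\parents_{\G}(i)$ and on all predecessors of $i$, the paper's Lemma~C.1). Note that your ``spurious or anti-oriented'' claim is itself a \emph{consequence} of this lemma (the direct edge $i\to j$, were it in $\G^\ast$, would be a $d$-connecting path violating the conclusion), so it cannot serve as a substitute hypothesis. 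You will need the full statement about all $d$-connecting paths, both to establish $\I$-coveredness in Phase~1 and to justify that the within-MEC path in Phase~2 stays $\I$-covered.
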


The proof of Theorem~\ref{thm: consistency of interventional greedy sp} is given in the Supplementary Material. When only observational data is available, Algorithm~\ref{alg: interventional greedy sp} boils down to greedy SP. We remark that the number of queries conducted in a given step of Algorithm~\ref{alg: interventional greedy sp} is, in general, strictly less than in the purely observational setting.  
That is to say, $\I$-covered arrows generally constitute a strict subset of the covered arrows in a DAG.  
At first glance, keeping track of the $\I$-covered edges may appear computationally inefficient.  
However, at each step we only need to update this list locally; so the computational complexity of the algorithm is not drastically impacted by this procedure. Hence, access to interventional data is beneficial in two ways: it allows to reduce the search directions at every step and it often allows to estimate the true DAG more accurately, since an $\I$-MEC is in  general smaller than an MEC.
Note that in this paper all the theoretical analysis are based on the low-dimensional setting, where $p \ll n$. The high-dimensional consistency of greedy SP is shown in~\cite{SWUM17}, and it is not difficult to see that the same high-dimensional consistency guarantees also apply to IGSP.


\begin{ideas}
\end{ideas}

\section{Evaluation} 
\label{sec: evaluation}

In this section, we compare Algorithm~\ref{alg: interventional greedy sp}, which we call \emph{Interventional Greedy SP} (\emph{IGSP}) with GIES on both simulated and real data. Algorithm~\ref{alg: yuhaos bic} is of interest from a theoretical perspective, but it is computationally inefficient since it requires performing two variable selection procedures per update.  Therefore, it will not be analyzed in this section.
The code utilized for the following experiments can be found at \url{https://github.com/yuhaow/sp-intervention}.

\subsection{Simulations}
\label{subsec: simulations}

Our simulations are conducted for linear structural equation models with Gaussian noise:
$$
(X_1,\ldots,X_p)^T = ((X_1,\ldots,X_p)A)^T +\epsilon,
$$
where $\epsilon\sim\mathcal{N}(0,{\bf 1}_p)$ and $A = (a_{ij})_{i,j = 1}^p$ is an upper-triangular matrix of edge weights with $a_{ij}\neq 0$ if and only if $i\rightarrow j$ is an arrow in the underlying DAG $\G^\ast$. 
For each simulation study we generated $100$ realizations of an (Erd\"os-Renyi) random $p$-node Gaussian DAG model for  $p\in\{10, 20\}$ with an expected edge density of 1.5.  
The collections of interventional targets $\I = \{I_0 := \emptyset, I_1,\ldots,I_K\}$ always consist of the empty set $I_0$ together with $K = 1$ or $2$. 
For $p = 10$, the size of each intervention set was $5$ for $K=1$ and $4$ for $K=2$. For $p = 20$, the size was increased up to $10$ and $8$ to keep the proportion of intervened nodes constant.
In each study, we compared GIES with Algorithm~\ref{alg: interventional greedy sp} for $n$ samples for each intervention with $n = 10^3,10^4,10^5$. 
\begin{figure}[t!]
\centering
\subfigure[$p=10$, $K = 1$]{\includegraphics[width=0.245\textwidth]{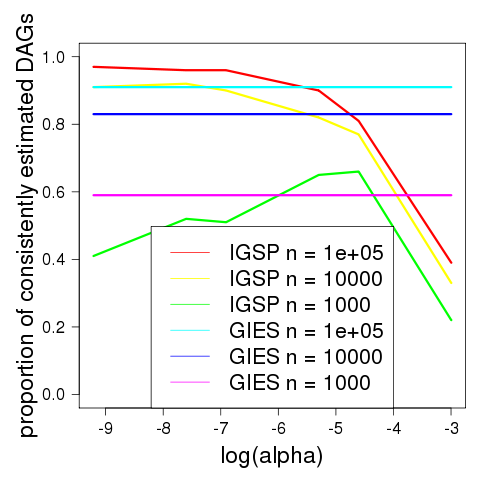}\label{fig:10_3}}
\subfigure[$p=10$, $K = 2$]{\includegraphics[width=0.245\textwidth]{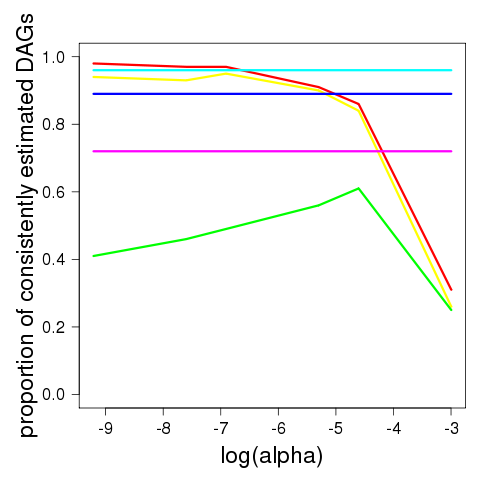}\label{fig:10_5}}
\subfigure[$p=20$, $K = 1$]{\includegraphics[width=0.245\textwidth]{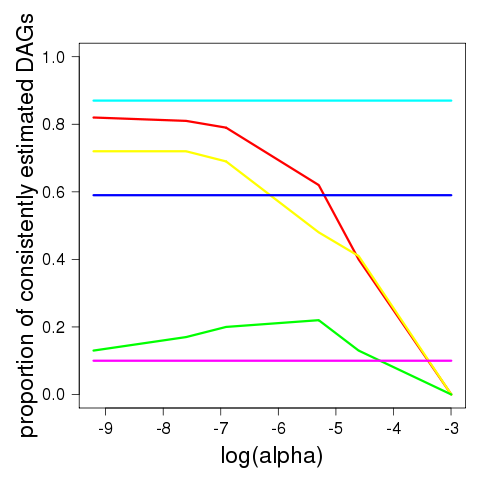}\label{fig: 20_3}}
\subfigure[$p=20$, $K = 2$]{\includegraphics[width=0.245\textwidth]{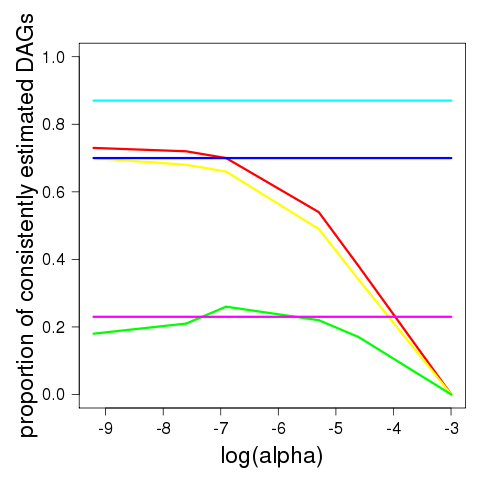}\label{fig: 20_5}}
\caption{
The proportion of consistently estimated DAGs for 100 Gaussian DAG models on $p$ nodes with $K$ single-node interventions.  }
	\label{fig: simulations for 10 and 20 nodes}
\end{figure}
Figure~\ref{fig: simulations for 10 and 20 nodes} shows the proportion of consistently estimated DAGs as distributed by choice of cut-off parameter for partial correlation tests. Interestingly, although GIES is not consistent on random DAGs, in some cases it performs better than IGSP, in particular for smaller sample sizes. 
However, as implied by the consistency guarantees given in Theorem~\ref{thm: consistency of interventional greedy sp}, IGSP performs better as the sample size increases.

We also conducted a focused simulation study on models for which the data-generating DAG $\G$ is that depicted on the left in Figure~\ref{fig: counterexample}, for which GIES is not consistent.  
In this simulation study, we took 100 realizations of Gaussian models for the data-generating DAG $\G$ for which the nonzero edge-weights $a_{ij}$ were randomly drawn from $[-1,-c,)\cup(c,1]$ for $c = 0.1,0.25,0.5$. The interventional targets were $\I = \lbrace I_0 = \emptyset$, $I_1 \rbrace$, where $I_1$ was uniformly at random chosen to be $\lbrace 4 \rbrace$, $\lbrace 5 \rbrace$, $\lbrace 4,5 \rbrace$. 
Figure~\ref{fig: counterexample simulations} shows, for each choice of $c$, the proportion of times $\G$ was consistently estimated as distributed by the choice of cut-off parameter for the partial correlation tests.  
We see from these plots that as expected from our theoretical results GIES recovers $\G$ at a lower rate than Algorithm~\ref{alg: interventional greedy sp}. 
\begin{figure}[t!]
\centering
\subfigure[$c = 0.1$]{\includegraphics[width=0.32\textwidth]{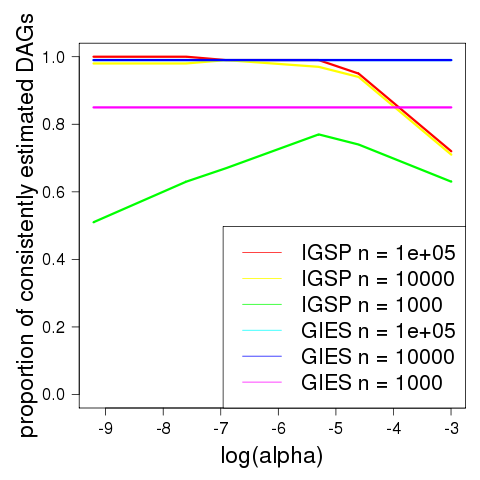}\label{fig:10_3}}
\subfigure[$c = 0.25$]{\includegraphics[width=0.32\textwidth]{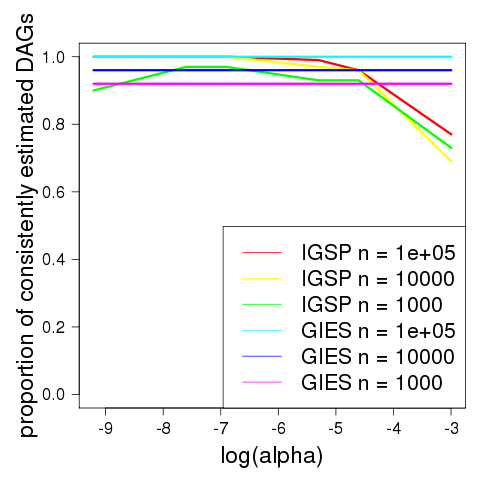}\label{fig:10_5}}
\subfigure[$c = 0.5$]{\includegraphics[width=0.32\textwidth]{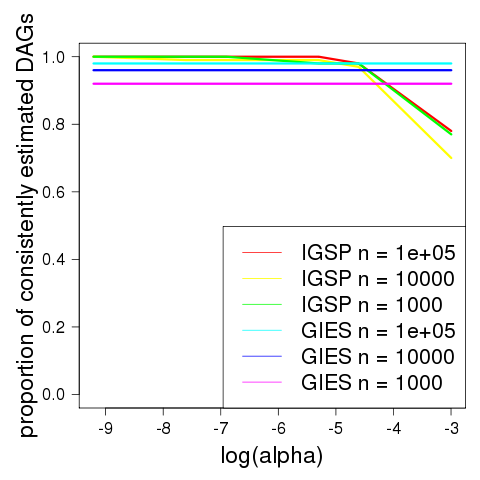}\label{fig: 20_3}}
\caption{
Proportion of times the DAG $\G$ from Figure~\ref{fig: counterexample} (left) is consistently estimated under GIES and Algorithm~\ref{alg: interventional greedy sp} for Gaussian DAG models with edge-weights drawn from $[-1,-c)\cup(c,1]$.
}
	\label{fig: counterexample simulations}
\end{figure}

\subsection{Application to Real Data}
\label{subsec: application on real data}
In the following, we report results for studies conducted on two real datasets coming from genomics.  
The first dataset is the protein signaling dataset of \emph{Sachs et al.} \cite{Sachs_2005}, and the second is the single-cell gene expression data generated using perturb-seq in \cite{perturb-seq}.

{\bf Analysis of protein signaling data.}  
The dataset of Sachs et al.~\cite{Sachs_2005} consists of 7466 measurements of the abundance of phosphoproteins and phospholipids recorded under different experimental conditions in primary human immune system cells.  
The different experimental conditions are generated using various reagents that inhibit or activate signaling nodes, and thereby correspond to interventions at different nodes in the protein signaling network.  
The dataset is purely interventional and most interventions take place at more than one target.  
Since some of the experimental perturbations effect receptor enzymes instead of the measured signaling molecules, we consider only the 5846 measurements in which the perturbations of receptor enzymes are identical.  
In this way, we can define the observational distribution to be that of molecule abundances in the model where only the receptor enzymes are perturbed.  
This results in $1755$ observational measurements and $4091$ interventional measurements.  Table~E.2 in the Supplementary Material summarizes the number of samples as well as the targets for each intervention.
For this dataset we compared the GIES results reported in \cite{HB15} with Algorithm~\ref{alg: interventional greedy sp} using both, a linear Gaussian and a kernel-based independence criterium \cite{F07,T09}.  
A crucial advantage of Algorithm~\ref{alg: interventional greedy sp} over GIES is that it is nonparametric and does not require Gaussianity.  
In particular, it supports kernel-based CI tests that are in general able to deal better with non-linear relationships and non-Gaussian noise, a feature that is typical of datasets such as this one.  
\begin{figure}[t!]
\centering
\subfigure[Directed edge recovery]{\includegraphics[width=0.45\textwidth]{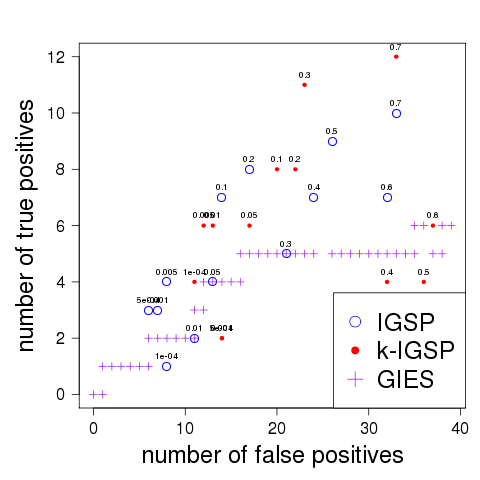}\label{fig: TPFP}} \qquad
\subfigure[Skeleton recovery]{\includegraphics[width=0.45\textwidth]{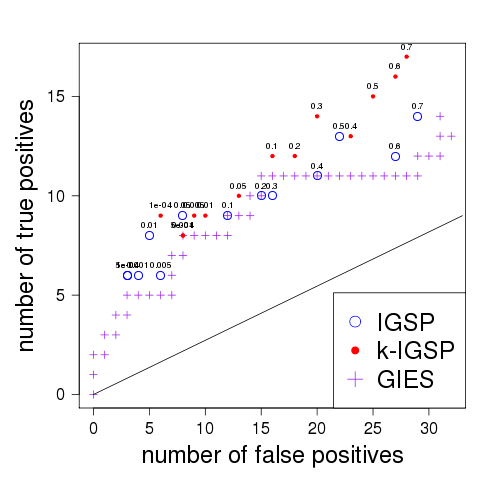}\label{fig: TPFP_Skeleton}}
\caption{
ROC plot of the models estimated from the data~\cite{Sachs_2005} using GIES as reported in~\cite{HB12} and the linear Gaussian and kernel-based versions of IGSP with different cut-off values for the CI tests. The solid line indicates the accuracy achieved by random guessing.
}
	\label{fig: sachs et al}
\end{figure}

For the GIES algorithm we present the results of \cite{HB12} in which the authors varied the number of edge additions, deletions, and reversals as tuning parameters.  
For the linear Gaussian and kernel-based implementations of IGSP our tuning parameter is the cut-off value for the CI tests, just as in the simulated data studies in Section~\ref{subsec: simulations}.  
Figure~\ref{fig: sachs et al} reports our results for thirteen different cut-off values in $[10^{-4}, 0.7]$, which label the corresponding points in the plots.  
The linear Gaussian and kernel-based implementations of IGSP are comparable and generally both outperform GIES.  
The Supplementary Material contains a comparison of the results obtained by IGSP on this dataset to other recent methods that allow also for latent confounders, such as ACI, COmbINE and ICP.

{\bf Analysis of perturb-seq gene expression data.}
We analyzed the performance of GIES and IGSP on perturb-seq data published by Dixit et al. \cite{perturb-seq}. The dataset contains observational data as well as interventional data from $\sim$30,000 bone marrow-derived dendritic cells (BMDCs). Each data point contains gene expression measurements of 32,777 genes, and each interventional data point comes from a cell where a single gene has been targeted for deletion using the CRISPR/Cas9 system.

After processing the data for quality, the data consists of 992 observational samples and 13,435 interventional samples from eight gene deletions. The number of samples collected under each of the eight interventions is shown in the Supplementary Material. These interventions were chosen based on empirical evidence that the gene deletion was effective\footnote{An intervention was considered effective if the distribution of the gene expression levels of the deleted gene is significantly different from the distribution of its expression levels without intervention, based on a Wilcoxon Rank-Sum test with $\alpha$ = 0.05. Ineffective interventions on a gene are typically due to poor targeting ability of the guide-RNA designed for that gene.}. We used GIES and IGSP to learn causal DAGs over 24 of the measured genes, including the ones targeted by the interventions, using both observational and interventional data. We followed \cite{perturb-seq} in focusing on these 24 genes, as they are general transcription factors known to regulate each other as well as numerous other genes~\cite{garber12}.

We evaluated the learned causal DAGs based on their accuracy in predicting the true effects of each of the interventions (shown in Figure \ref{fig: heatmap}) when leaving out the data for that intervention. Specifically, if the predicted DAG indicates an arrow from gene A to gene B, we count this as a true positive if knocking out gene A caused a significant change\footnote{Based on a Wilcoxon Rank-Sum test with $\alpha$ = 0.05, which is approximately equivalent to a q-value of magnitude $\geq$ 3 in Figure \ref{fig: heatmap}} in the distribution of gene B, and a false positive otherwise. 
For each inference algorithm and for every choice of the tuning parameters, we learned eight causal DAGs, each one trained with one of the interventional datasets being left out. We then evaluated each algorithm based on how well the causal DAGs are able to predict the corresponding held-out interventional data. As seen in Figure~\ref{fig: TPFP_perturb_seq}, IGSP predicted the held-out interventional data better than GIES (as implemented in the R-package pcalg) and random guessing, for a number of choices of the cut-off parameter. The true and reconstructed networks for both genomics datasets are shown in the Supplementary Material.

\section{Discussion}
\label{sec: discussion}
We have presented two hybrid algorithms for causal inference using both observational and interventional data and we proved that both algorithms are consistent under the faithfulness assumption.  
These algorithms are both interventional adaptations of the Greedy SP algorithm and are the first algorithms of this type that have consistency guarantees.  
While Algorithm~\ref{alg: yuhaos bic} suffers a high level of inefficiency, IGSP is implementable and competitive with the state-of-the-art, i.e., GIES.  
Moreover, IGSP has the distinct advantage that it is nonparametric and therefore does not require a linear Gaussian assumption on the data-generating distribution.  
We conducted real data studies for protein signaling and single-cell gene expression datasets, which are typically non-linear with non-Gaussian noise.  
In general, IGSP outperformed GIES.  
This purports IGSP as a viable method for analyzing the new high-resolution datasets now being produced by procedures such as perturb-seq. 
An important challenge for future work is to make these algorithms scale to 20,000 nodes, i.e., the typical number of genes in such studies. 
In addition, in future work it would be interesting to extend IGSP to allow for latent confounders. An advantage of not allowing for latent confounders is that a DAG is usually more identifiable. For example, if we consider a DAG with two observable nodes, a DAG without confounders is fully identifiable by intervening on only one of the two nodes, but the same is not true for a DAG with confounders.

\begin{figure}[t!]
\centering
\subfigure[True effects of gene deletions]{\includegraphics[width=0.45\textwidth]{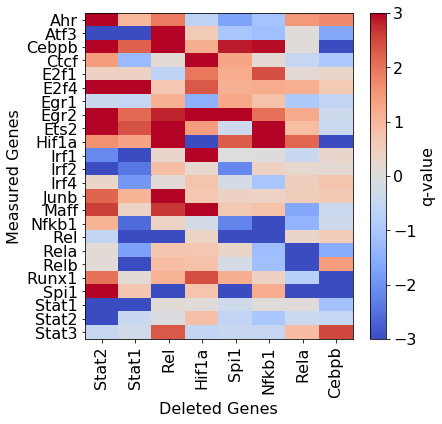}\label{fig: heatmap}}
\subfigure[Causal effect prediction accuracy rate]{\includegraphics[width=0.45\textwidth]{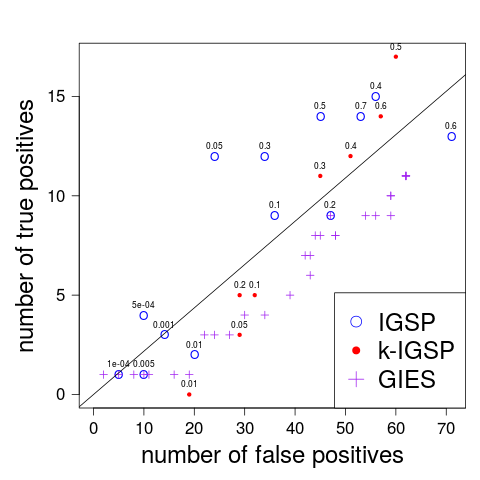}\label{fig: TPFP_perturb_seq}}
\caption{
(a) Heatmap of the true effects of each gene deletion on each measured gene. The q-value has the same magnitude as the log p-value of the Wilcoxon rank-sum test between the distributions of observational data and the interventional data. Positive and negative q-values indicate increased and decreased abundance as a result of deletion respectively. (b) ROC plot of prediction accuracy by the causal DAGs learned by IGSP and GIES. The solid line indicates the accuracy achieved by random guessing. 
}
  \label{fig:perturb_seq}
\end{figure}

\section*{Acknowledgements}
Yuhao Wang was supported by DARPA (W911NF-16-1-0551) and ONR (N00014-17-1-2147). Liam Solus was supported by an NSF Mathematical Sciences Postdoctoral Research Fellowship (DMS - 1606407). Karren Yang was supported by the MIT Department of Biological Engineering. Caroline Uhler was partially supported by DARPA (W911NF-16-1-0551), NSF (1651995) and ONR (N00014-17-1-2147). We thank Dr.~Sofia Triantafillou from the University of Crete for helping us run COmbINE.

\newpage
\appendix

\section*{\LARGE{Appendix}}
\label{sec: appendix}

\counterwithin{figure}{section}
\counterwithin{table}{section}

\section{Counterexample to Consistency of GIES.}

In the following, we verify that the example of GIES described in Section~3 is in fact a counterexample to consistency of GIES with the BIC score function.  
Recall that the DAG on the left in Figure~1, which we denote $\G_0$, is taken to be the data-generating DAG, and our collection of interventions is $\I = \{I_1 = \emptyset, I_2 = \{4\}, I_3 = \{5\}\}$.  
Suppose that the number of samples, $n_1,n_2,n_3$, drawn from the interventional distributions $\PP^1,\PP^2,\PP^3$, satisfy $n_1 = Cn_2 = Cn_3$ for some constant $C>1$, and that GIES arrives at the DAG $\G$ depicted on the right in Figure~1.  
Here, we also assume that the observational distribution $\PP^1$ is faithful to $\G_0$.  
We claim that this DAG is a local maximum of the GIES algorithm.  

To see this, first notice that since $5\rightarrow 4$ is the only covered edge in $\G$, then its $\I$-MEC has size one.  
Also notice that the DAG $\G$ is the minimal I-MAP of $\G_\pi$ for the permutation 
$
\pi = 1276543.  
$
Therefore, by consistency of GES under faithfulness~\cite{C02}, deleting any edge of $\G$ would result in a DAG with a strictly higher BIC.  
Thus, it only remains to verify that $\G$ is a local maximum with respect to the turning phase.  
We begin by checking that turning the only covered arrow in $\G$ does not increase the BIC score function with probablilty $1$.  
In the following, for a node $j$, we let
$
\I_{-j} := \I\backslash \lbrace k \mid j \in I_k\rbrace.
$
We may then express the score of $\G$ as 
\begin{align*}
\Score(\G, \hat{X}) := \sum_{j=1}^p s(j, {\pa}_j(\G), \hat{X}^{\I_{-j}}) + C - \lambda_n \vert \G \vert, 
\end{align*}
where $s(j, {\pa}_j(\G), \hat{X}^{\I_{-j}})$ is the log of the regression residual when regressing $j$ on ${\pa}_j(\G)$ using the data from the truncated intervention set $\I_{-j}$.  
Formally, 
\begin{align*}
s(j, {\pa}_j(\G), \hat{X}^{\I_{-j}}) = - \frac{1}{2} \frac{n_{-j}}{n} \log \left( \underset{a \in \R^{\vert {\pa}_j(\G) \vert}}{\min} \sum_{k \in \I_{-j}}\Vert \hat{X}_j^k - \hat{X}_{{\pa}_j(\G)}^k \cdot a \Vert_2^2 / n_{-j} \right).
\end{align*}
Now let $\G^\prime$ denote the DAG produced by reversing the arrow $d\rightarrow c$ in $\G$, and let $\hat{\rho}_{i, j \mid S}^\I$ denote the partial correlation testing coefficient of $i$ and $j$ given some $S\subset[p]$ using interventional data $\hat{X}^k, \forall k \in \I$.
If we take $S = \parents_\G(i)\cap\parents_\G(j)$, then by~\cite[Lemma 5.1]{NHM15} we have that
\begin{align*}
\begin{split}
\Score(\G', \hat{X})  - \Score(\G, \hat{X}) &  = s(c, S, \hat{X}^{\I_{-c}}) - s(c, S \union \lbrace d \rbrace, \hat{X}^{\I_{-c}}) \\
& \hspace{.5in} + s(d, S \union \lbrace c \rbrace, \hat{X}^{\I_{-d}}) - s(d, S, \hat{X}^{\I_{-d}}), \\
& = - \frac{1}{2} \frac{n_{-d}}{n} \log(1 - (\hat{\rho}_{c, d \mid S}^{\I_{-d}})^2) + \frac{1}{2} \frac{n_{-c}}{n} \log(1 - (\hat{\rho}_{c, d \mid S}^{\I_{-c}})^2).
\end{split}
\end{align*}
Since $n_1 = Cn_2 = Cn_3$ it follows that the distributions of $\hat{\rho}_{c, d \mid S}^{\I_{-d}}$ and $\hat{\rho}_{c, d \mid S}^{\I_{-c}}$ are always identical.  
Therefore, $\Score(\G', \hat{X}) < \Score(\G, \hat{X})$ with probability $\frac{1}{2}$.

It now only remains to verify that turning any non-covered edge in $\G$ increases the value of the BIC score function.  
Suppose that $\G^\prime$ is a DAG produced from turning some edge in $\G$ other than $5\rightarrow 4$.  
Since such an edge is not covered, $\G^\prime$ will not be an independence map of the un-intervened distribution $\PP^1$.  
Therefore, there exists some sufficiently large $C>1$ such that the score of $\G$ is larger than $\G^\prime$.  
This is because the score function for $C$ large enough is dominated by the part that depends on the observational data.
Thus, we conclude that $\G$ is a local maximum of GIES. 
\hfill $\square$

\bigskip

\section{Counterexample to Consistency of Algorithm~1 without the Slack Factor}

We now verify that the example described in Section~4 shows that Algorithm~1 without the use of the slack factor $\delta_n$ is not consistent.  
The proof of this statement is similar to that of the counterexample to consistency of GIES, and so we adopt the exact same set-up and notation used in the previous proof.  
Unlike GIES, Algorithm~1 only uses moves corresponding to reversals of covered edges in the observational DAG $\G_0$, depicted on the right in Figure~1.  
Thus, the only possible move Algorithm~1 can make is to reverse the covered arrow $5\rightarrow 4$.  
If we denote the resulting graph by $\G^\prime$, then similar to the previous proof, the difference in the scores $\Score(\G^\prime) - \Score(\G)$ can be computed as follows:
\begin{align*}
\begin{split}
\Score(\G', \hat{X})  - \Score(\G, \hat{X}) 	&= 	- \frac{1}{2} \sum_{k \in \I_{j \setminus i}} \log \left( 1 - (\hat{\rho}_{i, j \vert S}^{k})^2 \right) + \frac{1}{2} \sum_{k \in \I_{i \setminus j}} \log \left( 1 - (\hat{\rho}_{i, j \vert S}^{k})^2 \right)
\end{split}
\end{align*}
Since $n_1 = Cn_2 = Cn_3$ and there is no arrow between $4$ and $5$ in either of $\G^2$ or $\G^3$, then the distributions of $\hat{\rho}_{4, 5 \mid S}^{\I_{-5}}$ and $\hat{\rho}_{4, 5 \mid S}^{\I_{-4}}$ are identical.  
Therefore, $\Score(\G', \hat{X}) < \Score(\G, \hat{X})$ with probability $\frac{1}{2}$.  
\hfill$\square$

\bigskip

\section{Proof of Theorem~4.1}

Recall that a DAG $\HH$ is called an \emph{independence map} of a DAG $\G$, denoted $\G\leq\HH$, if every CI relation entailed by the $d$-separation statements of $\HH$ are also entailed by $\G$.  
The proof of Theorem~4.1 relies on the transformational relationship between a DAG and an independence map given in \cite[Theorem 4]{C02}.  
In short, the theorem states that for an independence map $\G\leq\HH$, there exists a sequence of covered arrow reversals and arrow additions such that after each arrow reversal or addition, the resulting DAG $\G^\prime$ satisfies $\G\leq\G^\prime\leq\HH$, and after all arrow reversals and additions $\G' = \HH$.  
The proof of this fact follows from the APPLY-EDGE OPERATION algorithm \cite{C02}, which describes the choices that can be made to produce such a transformation of independence maps.  
In \cite{SWUM17} the authors refer to the sequence of independence maps 
$$
\G\leq \G^{(1)}\leq\G^{(2)}\leq\cdots\leq\G^{(m-1)}\leq\HH
$$
that transforms $\G$ into $\HH$ as a \emph{Chickering sequence}.  

A key feature of the APPLY-EDGE OPERATION algorithm is that it recurses on the common sink nodes between $\G$ and $\HH$.  
Namely, if $\G$ and $\HH$ have any sink nodes with the same set of parents in both DAGs, the algorithm first deletes these nodes and compares the resulting subDAGs.  
This is repeated until there are no sink nodes in the two graphs with the exact same set of parents.
The remaining set of sink nodes that must be fixed are denoted $s_1\ldots, s_M$.  
Then the algorithm begins to reverse and add arrows to the relevant subDAG of $\G$ until a new common sink node appears, which it then deletes, and so on. 
Once the algorithm corrects one such sink node in the subDAG of $\G$ to match the same node in the subDAG of $\HH$, we say the sink node has been \emph{resolved}.   
In \cite{SWUM17} the authors observe that if we have an independence map of minimal I-MAPs $\G_\pi\leq\G_\tau$, then there exists a Chickering sequence that adds arrows and reverses arrows so that exactly one sink node is resolved at a time; i.e., there is no need to do arrow reversals and additions to any one sink node in order to resolve another.  
To prove Theorem~4.1, we utilize this fact and the following two lemmas. 
\begin{lemma} 
\label{lem: d-connecting equivalence}
Suppose $\G$ is an independence map of the data-generating DAG $\G_{\pi^\ast}$ for the permutation $\pi$. 
Let $i \to j$ denote a covered edge in $\G$, and let $S$ denote the set of nodes that precedes $i$ in permutation $\pi$; i.e.,
\begin{align*}
S := \lbrace \ell \mid \pi(\ell) < \pi(i) \rbrace,
\end{align*}
then in $\G_{\pi^\ast}$ the set of d-connecting paths from $i$ to $j$ given $S$ is the same as the set of d-connecting paths from $i$ to $j$ given $\parents_{\G}(i)$.
\end{lemma}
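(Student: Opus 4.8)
The plan is to compare the two sets of $d$-connecting paths from $i$ to $j$ in $\G_{\pi^\ast}$ — one conditioning on $S = \{\ell : \pi(\ell) < \pi(i)\}$ and the other conditioning on $\parents_\G(i)$ — and show they coincide by a double inclusion, exploiting two structural facts: first, that $i \to j$ is a covered edge in $\G$, so $\parents_\G(i) = \parents_\G(j)$; and second, that $\G$ is an I-MAP of $\G_{\pi^\ast}$ with $\G = \G_\pi$ a minimal I-MAP relative to a permutation $\pi$ in which every element of $\parents_\G(i)$ precedes $i$, so $\parents_\G(i) \subseteq S$. The key observation is that the ``extra'' vertices $S \setminus \parents_\G(i)$ are precisely the non-parents of $i$ (and, by coveredness, of $j$) that nonetheless come before $i$ in $\pi$; by the minimal-I-MAP defining rule for $\G_\pi$, each such vertex $v$ satisfies $v \independent i \mid S' \setminus \{v, i\}$-type statements along the construction, which translates via faithfulness of $\PP$ (hence of the $d$-separation structure of $\G_{\pi^\ast}$) into $d$-separation facts in $\G_{\pi^\ast}$ that we can use to argue these vertices cannot create or destroy $d$-connecting paths between $i$ and $j$.

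First I would set up notation and record the containment $\parents_\G(i) = \parents_\G(j) \subseteq S$, and note $i, j \notin S$. For the inclusion ``paths $d$-connecting given $\parents_\G(i)$ $\Rightarrow$ paths $d$-connecting given $S$'': since $S \supseteq \parents_\G(i)$, enlarging the conditioning set can only block paths through non-colliders or unblock paths through colliders; I would show neither happens by using that $i$ and $j$ are non-adjacent-to or separated-from each $v \in S \setminus \parents_\G(i)$ in the relevant sense. Concretely, because $\G_\pi$ is the minimal I-MAP, for any $v \in S\setminus\parents_\G(i)$ we have $v \not\to i$ in $\G_\pi$, which by the I-MAP property and faithfulness gives $v \independent i \mid \{\pi_1,\dots,\pi_{\text{pos}(i)}\}\setminus\{v,i\}$ in $\PP$, hence $v$ is $d$-separated from $i$ given that set in $\G_{\pi^\ast}$; and similarly for $j$ using coveredness. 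I would then argue that such a $v$ appears on no $d$-connecting path between $i$ and $j$ except possibly as a collider whose descendants are already controlled, and conversely cannot be a non-collider on such a path, so adding it to or removing it from the conditioning set changes nothing. For the reverse inclusion ``$d$-connecting given $S$ $\Rightarrow$ $d$-connecting given $\parents_\G(i)$'' one runs the same analysis shrinking the conditioning set from $S$ down to $\parents_\G(i)$.

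The main obstacle I anticipate is the collider case: a vertex $v \in S\setminus\parents_\G(i)$ could in principle sit as a collider on a path $i \dots \to v \leftarrow \dots j$, and since $v \in S$ this collider is ``activated'' when we condition on $S$ but (if $v \notin \parents_\G(i)$ and $v$ has no descendant in $\parents_\G(i)$) deactivated when we condition on $\parents_\G(i)$ — which would break the claimed equality. The resolution must be that the minimal-I-MAP structure forbids exactly this: if such an activated collider path existed in $\G_{\pi^\ast}$, then $i \not\independent j \mid S$ would be witnessed through $v$, yet the defining rule for the edge $i\to j \in A_\pi$ already records $i \not\independent j \mid S$ (that is why the arrow is present), and coveredness plus the APPLY-EDGE/Chickering-sequence machinery from \cite{C02,SWUM17} forces the parent sets to be arranged so that any $d$-connection between $i$ and $j$ given $S$ is in fact realized already given $\parents_\G(i)$. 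Making this precise — i.e., showing that the only $d$-connecting paths given $S$ are those that go directly through $\parents_\G(i)$ as non-colliders or through colliders whose relevant descendants lie in $\parents_\G(i)$ — is where the real work lies, and I would handle it by a careful case analysis on the first edge of a putative $d$-connecting path out of $i$, invoking faithfulness to pull $d$-separation statements in $\G$ back to $\G_{\pi^\ast}$.
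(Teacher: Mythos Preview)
Your setup is right, and you correctly isolate the crucial fact: for every $v\in S\setminus\parents_\G(i)$ one has (via the minimal I-MAP rule, or equivalently the independence-map property) that $v$ is $d$-separated from $i$ given $S\setminus\{v\}$ in $\G_{\pi^\ast}$. Where your proposal goes wrong is in how you use this fact to dispose of the collider case. Your proposed resolution --- invoking the edge rule for $i\to j$ (which conditions on predecessors of $j$, not of $i$), the coveredness of $i\to j$, and the APPLY-EDGE/Chickering machinery --- is off target. None of that is needed, and the ``case analysis on the first edge out of $i$'' you sketch does not converge to an argument.

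The paper's proof is a direct contradiction argument that uses only the fact you already have. For the direction ``$d$-connecting given $S$ $\Rightarrow$ $d$-connecting given $\parents_\G(i)$'': suppose a path $P$ is $d$-connecting given $S$ but blocked given $\parents_\G(i)$. Since $\parents_\G(i)\subseteq S$ and no non-collider on $P$ lies in $S$, the only way $P$ can become blocked is that some collider on $P$ has a descendant (possibly itself) $a\in S\setminus\parents_\G(i)$ but no descendant in $\parents_\G(i)$. Now look at the segment of $P$ from $i$ to that collider, concatenated (if $a$ is a proper descendant) with the directed path down to $a$: this yields a path witnessing that $a$ is $d$-connected to $i$ given $S\setminus\{a\}$ in $\G_{\pi^\ast}$. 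That contradicts $a\notin\parents_\G(i)$. The reverse direction is symmetric: a path $d$-connecting given $\parents_\G(i)$ but blocked given $S$ must contain a non-collider $a\in S\setminus\parents_\G(i)$; taking the one closest to $i$ along $P$, the sub-path from $i$ to $a$ is $d$-connecting given $S\setminus\{a\}$, again a contradiction.

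Two further remarks. First, the coveredness hypothesis plays no role in the paper's proof; the argument only ever looks at $i$ and never invokes $\parents_\G(j)=\parents_\G(i)$. So your plan to ``use coveredness'' for both $i$ and $j$ is a detour. Second, the Chickering-sequence results you cite are downstream of this lemma (they are used in Lemma~\ref{lem: d-connecting paths}), not upstream; importing them here would be circular or at best unnecessary.
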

\begin{proof}
We prove this by contradiction.   
Suppose in $\G_{\pi^\ast}$ there exists a path $P_{i \to j}$ that d-connects $i$ and $j$ given $S$ but $P_{i \to j}$ is $d$-separated given $\parents_{\G}(i)$.  
Then there must exist at least one node $a \in S \setminus \parents_{\G}(i)$ that is a collider on $P_{i \to j}$ or a descendent of a collider on $P_{i \to j}$. 
If $a$ is a collider on $P_{i \to j}$, then $a$ $d$-connects $i$ given $S \setminus \lbrace a \rbrace$. 
If no such collider exists, then $a$ must be a descendent of a collider $s$ on $P_{i \to j}$.  
Moreover, $a$ $d$-connects $s$ given $S \setminus \lbrace a \rbrace$ and $s$ also $d$-connects $i$ given $S \setminus \lbrace a \rbrace$. 
Since $s \not\in S$, $a$ $d$-connects $i$ given $S \setminus \lbrace a \rbrace$. 
However, since $\G$ is an independence map, $a$ must be a parent of node $i$ in $\G$, which contradicts with the fact that $a \not\in \parents_{\G}(i)$.

Suppose in $\G_{\pi^\ast}$ there exists a path $P_{i \to j}$ that $d$-connects $i$ and $j$ given $\parents_{\G}(i)$ but is $d$-separated given $S$.  
Then there must exist some nodes in $S \setminus \parents_{\G}(i)$ that are non-colliders on $P_{i \to j}$. 
Let $a$ denote one of such nodes that is closest to $i$ on $P_{i \to j}$, then $a$ and $i$ must be d-connected given $S \setminus \lbrace a \rbrace$. 
Since $\G$ is an independence map, $a$ must be a parent of node $i$ in $\G$, which contradicts with the fact that $a \not\in \parents_{\G}(i)$.
\end{proof}

\begin{lemma}
\label{lem: d-connecting paths}
Given a permutation $\pi$ consider the sequence of minimal I-MAPs from $\G_\pi$ to the data-generating DAG $\G_{\pi^\ast}$ given by covered arrow reversals
\begin{align*}
\G_{\pi} = \G_{\pi^0} \geq \G_{\pi^1}\geq  \cdots \geq \G_{\pi^M} = \G_{\pi^\ast}.
\end{align*}
If the edge $i\rightarrow j$ is reversed in $\G_{\pi^{k-1}}$ to produce $\G_{\pi^k}$, then in $\G_{\pi^\ast}$ all d-connecting paths from $j$ to $i$ given $\parents_{\G_{\pi^{k-1}}}(i)$ must be pointing towards $i$ (i.e. the edge incident to $i$ on the path points to $i$).
\end{lemma}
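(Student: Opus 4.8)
Since $i\to j$ is covered we may, by passing to a topological ordering of $\G_{\pi^{k-1}}$ in which $i$ immediately precedes $j$ (this does not change the DAG), assume $i$ immediately precedes $j$ in $\pi:=\pi^{k-1}$. Write $S:=\{\ell:\pi(\ell)<\pi(i)\}$ for the set of predecessors of $i$, so that $S\cup\{i\}$ is the set of predecessors of $j$, and put $P:=\parents_{\G_{\pi^{k-1}}}(i)\subseteq S$; being covered, $i\to j$ also gives $\parents_{\G_{\pi^{k-1}}}(j)=P\cup\{i\}$.

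The plan is to argue by contradiction, after recording three facts. (i) Since the sequence satisfies $\G_{\pi^\ast}\leq\G_{\pi^{k-1}}$, the DAG $\G_{\pi^{k-1}}$ is an I-MAP of $\G_{\pi^\ast}$; it is also consistent with $\pi$, so its local Markov property at $i$ and at $j$ transfers to $\G_{\pi^\ast}$, yielding that in $\G_{\pi^\ast}$ the set $S\setminus P$ is d-separated from $i$ given $P$, and d-separated from $j$ given $P\cup\{i\}$. (ii) By Lemma~\ref{lem: d-connecting equivalence} applied to the covered arrow $i\to j$, in $\G_{\pi^\ast}$ the d-connecting paths from $i$ to $j$ given $P$ coincide with those given $S$, so either set may be used when reasoning about such paths. (iii) Because the reversals performed by the APPLY-EDGE procedure are always directed toward the target DAG, $\G_{\pi^\ast}$ does not contain the arrow $i\to j$; in particular any d-connecting path from $j$ to $i$ in $\G_{\pi^\ast}$ has at least one internal vertex.

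Now suppose, for contradiction, that some d-connecting path $Q$ from $j$ to $i$ given $P$ leaves $i$ through an out-edge, say $Q\colon i\to w_1-w_2-\cdots-j$. Deleting the first edge leaves a path $R$ from $w_1$ to $j$ whose internal vertices keep the same collider/non-collider status as in $Q$, so $R$ is d-connecting given $P$; and since $i\notin R$, enlarging the conditioning set to $P\cup\{i\}$ can only open colliders of $R$, so $R$ is d-connecting given $P\cup\{i\}$ too. Hence $w_1$ is d-connected to $j$ given $P\cup\{i\}$, which by (i) forces $w_1\notin S\setminus P$. On the other hand $w_1\in\children_{\G_{\pi^\ast}}(i)$, and since the skeleton of $\G_{\pi^\ast}$ is contained in that of $\G_{\pi^{k-1}}$, which is consistent with $\pi$, every child of $i$ in $\G_{\pi^\ast}$ lies in $P$ or, by (iii), strictly after $j$ in $\pi$. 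Feeding these two possibilities back into $Q$ — if $w_1\in P$ then $w_1$ must be a collider of $Q$, forcing $w_2\to w_1$ in $\G_{\pi^\ast}$; if $w_1$ lies after $j$ in $\pi$, then d-connectivity of $Q$ forces a descendant of the next collider along $Q$ into $S$ — and iterating this bookkeeping along $Q$, the goal is to produce a d-connection from $i$ to $S\setminus P$ given $P$, or from $j$ to $S\setminus P$ given $P\cup\{i\}$, contradicting (i); in the single remaining configuration one checks that $Q$ in fact enters $i$ through an in-edge, against the hypothesis.

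The crux is this last step. Tracking the colliders along $Q$ is delicate precisely because the arrows of $\G_{\pi^\ast}$ need not respect the ordering $\pi$, so a collider lying after $i$ in $\pi$ can still have a descendant inside $S$. I expect to control this by invoking the refinement of the Chickering sequence used in \cite{SWUM17}, in which sink nodes are resolved one at a time; this pins down how $\pi^{k-1}$ relates to $\pi^\ast$ in a neighbourhood of $i$ and $j$ and restricts where such colliders and their descendants can lie. Combined with the faithfulness assumption, so that d-separation and conditional independence agree, this should complete the proof.
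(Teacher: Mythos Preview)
Your proposal is not a proof: the final paragraph concedes that the ``crux'' --- tracing colliders along the offending path $Q$ --- is not carried out, and the closing sentence merely expresses hope that the sink-by-sink Chickering sequence from \cite{SWUM17} will control it. That hope is well-placed, but the mechanism you have in mind (bounding where colliders and their descendants can lie relative to $\pi$) is not how the argument actually closes, and I do not see how your path-chasing along $Q$ terminates. After the first step you land in one of two cases ($w_1\in P$, or $w_1$ after $j$ in $\pi$), and in each case you push to $w_2$; but nothing in your setup prevents the same dichotomy from recurring indefinitely, and you have no invariant that decreases along $Q$. The difficulty you name --- that arrows of $\G_{\pi^\ast}$ need not respect $\pi$ --- is exactly what blocks this local analysis.

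The paper's proof avoids the path-chase entirely by strengthening the statement and inducting along the Chickering sequence. It works from $\G_{\pi^\ast}$ \emph{forward} to $\G_\pi$, and at each stage $\G^{i_{j-1}}$ (the DAG obtained after resolving sinks $s_1,\ldots,s_{j-1}$) it proves the stronger claim: for \emph{every} arrow $a\to b$ in $\tilde\G^{i_{j-1}}$, every d-connecting path from $a$ to $b$ in $\G_{\pi^\ast}$ given $S_{\pi^{j-1}}(b)\setminus\{a\}$ points to $b$. The base case is $\G_{\pi^\ast}$ itself, where the only such path is the arrow $a\to b$. For the inductive step, resolving $s_{j'}$ changes only the position of $s_{j'}$ in the permutation, so any new d-connecting path from $a$ to $b$ must pass through $s_{j'}$ as a non-collider; the sub-path from $s_{j'}$ to $b$ then witnesses $s_{j'}\to b$ as an arrow of $\tilde\G^{i_{j'-1}}$, and the inductive hypothesis applied to \emph{that} arrow says this sub-path points to $b$. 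Lemma~\ref{lem: d-connecting equivalence} then converts the predecessor-set conditioning into the parent-set conditioning of the statement. The point is that the induction hypothesis must cover all arrows of the intermediate DAG, not just the covered one being reversed --- this is what lets the argument hand off from $(a,b)$ to $(s_{j'},b)$, and it is the idea your proposal is missing.
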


\begin{proof}
By \cite[Theorem 15]{SWUM17}, we know there exists a Chickering sequence from $\G_{\pi^\ast}$ to $\G_{\pi}$
\begin{align*}
\G_{\pi^\ast} = \G^0\leq \G^1\leq \cdots\leq \G^N = \G_\pi
\end{align*}
that resolves one sink at a time and, respectfully, reverses one edge at a time. 
Let $s_1,\ldots, s_M$ denote the list of sink nodes resolved in the Chickering sequence, labeled so that $s_j$ is the $j^{th}$ sink node resolved in the sequence.  
More specifically, this means that the Chickering sequence can be divided in terms of a sublist of DAGs $\G^{i_1}, \cdots, \G^{i_M}$ such that
$\G^{i_j}$ is the DAG produced by resolving sink $s_j$.  
It follows that the DAGs in the subsequence 
\begin{align*}
\G^{i_{j-1}+1}\leq \cdots \leq \G^{i_{j}-1}
\end{align*}
correspond to the arrow additions and covered arrow reversals that are needed to resolve sink $s_{j}$.  
For $t = 1,\ldots, q_j$ let $z_t$ denote the node such that $s_j \rightarrow z_t$ must be reversed in order to produce $\G^{i_j}$ from $\G^{i_{j-1}}$.  
We label these nodes such that $s_j\rightarrow z_t$ is reversed before $s_j\rightarrow z_{t+1}$ in the given Chickering sequence.  
Let $\G^{i_{j,t}}$ denote the DAG generated after reversing edge $s_j \to z_{t}$
Then we can write our sequence $G^{i_{j-1}}\leq  \cdots\leq \G^{i_j}$ as:
\begin{align*}
G^{i_{j-1}}\leq \G^{i_{j-1}+1}\leq \cdots\leq \G^{i_{j,t}}\leq \cdots\leq \G^{i_{j,t+1}} \leq\cdots\leq \G^{i_{j,q_j} - 1}\leq \G^{i_{j,q_j}} = \G^{i_j}.
\end{align*}
To prove the lemma, we must then show that for all $j$ and $t$, all $d$-connecting paths in $\G_{\pi^\ast}$ from $s_j$ to $z_t$ given $\parents_{\G^{i_{j,t}}}(z_{t})$ are pointing towards $z_t$.  

To prove this, let $\pi^{j-1}$ denote a permutation consistent with $\G^{i_{j-1}}$ and let $S_{\pi^{j-1}}(z_{t})$ denote the set of nodes that precedes $z_{t}$ in the permutation $\pi^{j-1}$; i.e.,
\begin{align*}
S_{\pi^{j-1}}(z_{t}) := \lbrace \ell \mid \pi^{j-1}(\ell) < \pi^{j-1}(z_{t})) \rbrace.
\end{align*}
If $\pi^{j-1} = \ldots s_j \ldots z_1 \ldots z_t \ldots z_{q_j} \ldots$, then for $t = 1,\ldots,q_j$, we can always choose a linear extension $\pi^{j,t}$ of $\G^{i_{j,t}}$ in which $\pi^{j,t} = \ldots z_1 \ldots z_ts_j \ldots z_{q_j} \ldots$, and 
$S_{\pi^{j-1}}(z_{t})\backslash\{s_j\} = S_{\pi^{j, t}}(z_{t})$ by moving $s_j$ forward in the permutation $\pi^{j-1}$ until it directly follows $z_t$.  
It is always possible to pick such an extension as $\pi^{j,t}$ since we can choose the extension of $\pi^{j-1}$ so that the only nodes in between $z_{t-1}$ and $z_t$ are the descendants of $z_{t-1}$ that are also ancestors of $z_t$.  
The existence of such an ordering of $\pi$ with respect to the ordering of the nodes $z_1,\ldots,z_{q_j}$ is implied by the choice of the maximal child in each iteration of step 5 of the APPLY-EDGE OPERATION algorithm that produces the Chickering sequence~\cite{C02}.  
Using Lemma~\ref{lem: d-connecting equivalence}, 
we know that any $d$-connecting path from $z_t$ to $s_j$ given $S_{\pi^{j, t}}(z_{t})$ in $\G_{\pi^\ast}$ is actually the same as a $d$-connecting path from $z_t$ to $s_j$ in $\G_{\pi^\ast}$ given $\parents_{\G^{i_{j,t}}}(z_{t})$.  
Since $S_{{\pi^{j,t}}}(z_t) = S_{{\pi^{j-1}}}(z_t)\backslash\{s_j\}$ then it remains to show that any $d$-connecting path from $s_j$ to $z_t$ given $S_{{\pi^{j-1}}}(z_t)\backslash\{s_j\}$ in $\G_{\pi^\ast}$ goes to $z_t$.  
Since $s_j\rightarrow z_t$ in $\G_{\pi^{j-1}}$, we prove the following, slightly stronger, statement: for any edge $a\rightarrow b \in \tilde{\G}^{i_{j-1}}$, all d-connecting paths from $a$ to $b$ given $S_{\pi^{j-1}} (b) \setminus \lbrace a \rbrace$ in $\G_{\pi^\ast}$ go to $b$. 

We prove this stronger statement via induction. 
If $\tilde{\G}^{i_{j-1}} = \tilde{\G}_{\pi^\ast}$, the statement is definitely true since the only possible $d$-connection between $a$ and $b$ given $S_{\pi^\ast}(b) \setminus \lbrace a \rbrace$ is the arrow $a\rightarrow b \in \G_{\pi^\ast}$. 
Suppose it is also true when $j = j'-1$.  
Recall the only difference between $\pi^{j'-1}$ and $\pi^{j'}$ is the position of $s_{j'}$.  
If in $\tilde{\G}^{i_{j'}}$ there is a new arrow $a \to b$, then this arrow corresponds to some paths that $d$-connect $a$ and $b$ given $S_{\pi^{j'}}(b) \setminus \lbrace a \rbrace$.  
However, they are $d$-separated given $S_{\pi^{j'-1}}(b) \setminus \lbrace a \rbrace$. 
Since $S_{\pi^{j'}}(b) = S_{\pi^{j'-1}}(b) \setminus \lbrace s_{j'} \rbrace$, then $s_{j'}$ must be in the middle of these new paths and is not a collider. 
In this case, removing $s_{j'}$ from the conditioning set would turn these paths into $d$-connections. 

Without loss of generality, we consider one of these new paths from $a$ to $b$ denoted as $P_{a \to b}$. 
Since $s_{j'}$ is in the middle of $P_{a \to b}$, let $P_{s_{j'} \to b}$ denote the latter part of $P_{a \to b}$. 
Obviously, $P_{s_{j'} \to b}$ also $d$-connects $s_{j'}$ and $b$ given $S_{\pi^{j'-1}}(b) \setminus \lbrace s_{j'} \rbrace$. 
As $\G^{i_{j'-1}}$ is an independence map of $\G_{\pi^\ast}$, $s_{j'} \to b$ must be an edge in $\G^{i_{j'-1}}$, and therefore it also exists in $\tilde{\G}^{i_{j'-1}}$. 
Notice, if $s_{j'} \to b \in \tilde{\G}^{i_{j'-1}}$ then, in $\G_{\pi^\ast}$, all paths that $d$-connect $s_{j'}$ and $b$ given $S_{\pi^{j'-1}}(b) \setminus \lbrace s_{j'} \rbrace$ go to $b$.  
Therefore, $P_{s_j \to b}$ is a path that goes to $b$. 
In this case, $P_{a \to b}$ is also a path that goes to $b$. 
As there is no specification of $P_{a \to b}$, this holds for all new paths, and this completes the proof.
\end{proof}

\emph{Proof of Theorem~4.1.}
We can now prove Theorem~4.1.  
Let $\PP$ be a distribution that is faithful with respect to an unknown I-MAP $\G_{\pi^\ast}$.
Suppose that observational and interventional data are drawn from $\PP$ for a collection of interventional targets $\I = \{I_1 := \emptyset, I_2,\ldots, I_K\}$, and that $\PP^k$ is faithful to $\G_{\pi^\ast}^k$ for all $k\in[K]$.  
We must show that Algorithm~1 returns to $\I$-MEC of $\G_{\pi^\ast}$.  
Suppose that we are at the DAG $\G_\pi$ for some permutation $\pi$ of $[p]$.  
By \cite[Theorem 15]{SWUM17} there exists a sequence of minimal I-MAPS
$$
\G_{\pi^\ast} = \G_{\pi^M}\leq\G_{\pi^{M-1}}\leq \cdots\leq \G_{\pi^0} = \G_\pi,
$$
where $\G_{\pi^{k}}$ is produced from $\G_{\pi^{k-1}}$ by reversing a covered arrow $i\rightarrow j$ and then deleting some edges of $\G_{\pi^{k-1}}$.  
In particular, this sequence arises from a Chickering sequence that resolves one sink node at a time, as in Lemma~\ref{lem: d-connecting paths}.  
We would now like to see that for such a path 
$$
\Score(\G_{\pi^{k}}) \geq \Score(\G_{\pi^{k-1}})- \delta_n,
$$
for all $k = 1,2,\ldots, M$.    
Suppose first that $\G_{\pi^{k-1}}$ and $\G_{\pi^{k}}$ differ only by a covered arrow reversal (i.e., they have the same skeleton).  
Using the notation from the previous proofs, we let $\hat{\rho}_{i, j \mid S}^k$ denote the value of the partial correlation of $i, j \mid S$ for some set $S\subset[p]$ based on data $\hat{X}^k$ from the intervention $I_k$.  
If we take $S = \parents_i(\G_{\pi^{k-1}})$, then by \cite[Lemma 5.1]{NHM15} and Lemma~\ref{lem: d-connecting paths} it follows that
\begin{align*}
\begin{split}
\Score(\G_{\pi^{k}}) - \Score(\G_{\pi^{k-1}}) = &- \frac{1}{2} \sum_{k \in \I_{j \setminus i}} \left(\log \left( 1 - (\hat{\rho}_{i, j \vert S}^{k})^2 \right) + \lambda_{n_k}\right)  \\
	&+ \frac{1}{2} \sum_{k \in \I_{i \setminus j}} \left(\log \left( 1 - (\hat{\rho}_{i, j \vert S}^{k})^2 \right) + \lambda_{n_k}\right).\\
\end{split}
\end{align*}
Note that the value of $\underset{k \in \I_{i \setminus j}}{\sum} \left( \log \left( 1 - (\hat{\rho}_{i, j \vert S}^{k})^2 \right) + \lambda_{n_k} \right)$ will be zero when the set $\I_{i \setminus j}$ is empty.  
It then follows from Lemma~\ref{lem: d-connecting paths} that 
$
\Score(\G_{\pi^{k}}) \geq \Score(\G_{\pi^{k-1}})- \delta_n,
$
for all $k = 1,\ldots, M$.  

The above argument shows that if two minimal I-maps $\G_{\pi^{k}}$ and $\G_{\pi^{k-1}}$ along the given sequence are in the same MEC then their relative scores in Algorithm~1 are at most nondecreasing.  
Thus, it only remains to show that if $\G_{\pi^{k-1}}$ is not in the $\I$-MEC of $\G_{\pi^\ast}$ then
$$
\Score(\G_{\pi^\ast}) > \Score(\G_{\pi^{k-1}}).
$$
Since $\G_{\pi^{k-1}}$ and $\G_{\pi^\ast}$ are not $\I$-Markov equivalent then, by \cite[Theorem 10]{HB12}, there is at least one $I_t\in\I$ such that $\G_{\pi^{k-1}}^t$ and $\G_{\pi^\ast}^t$ have different skeletons.  
However, in this case $\Score(\G_{\pi^\ast}) > \Score(\G_{\pi^{k-1}})$ since the interventional distribution $\PP^t$ is faithful to the DAG $\G_{\pi^\ast}^t$.  \hfill$\square$

\bigskip

\section{Proof of Theorem~4.4}

We would now like to prove that Algorithm~2 is consistent under the faithfulness assumption.  
That is, suppose we are given a collection of interventional targets $\I = \{I_1 = \emptyset, I_2,\ldots, I_K\}$ and data drawn from the distributions $\PP^1,\ldots,\PP^K$, all of which are faithful to the (respective) interventional DAGs $\G_{\pi^\ast}^1,\ldots,\G_{\pi^\ast}^K$.  
Then Algorithm~2 will return a DAG that is $\I$-Markov equivalent to $\G_{\pi^\ast}$.
In \cite{SWUM17}, the authors show that there exists a sequence of I-MAPs given by covered arrow reversals  
$$
\G_\pi\geq \G_{\pi^1}\geq \cdots \geq \G_{\pi^{m-1}} \geq \G_{\pi^m}\geq  \cdots\geq \G_{\pi^M} \geq\G_{\pi^\ast}
$$
taking us from any $\G_\pi$ to the data-generating DAG $\G_{\pi^\ast}$.  
We must now show that there exists such a sequence using only $\I$-covered arrow reversals.   
\begin{theorem} 
\label{thm: I-covered arrow reversal sequence}
For any permutation $\pi$, there exists a list of $\I$-covered arrow reversals from $\G_\pi$ to the data-generating DAG $\G_{\pi^\ast}$
$$
\G_\pi = \G_{\pi^0}\geq \G_{\pi^1}\geq \cdots \geq \G_{\pi^{m-1}} \geq \G_{\pi^m}\geq  \cdots\geq \G_{\pi^{M-1}} \geq \G_{\pi^M}=\G_{\pi^\ast}
$$  
\end{theorem}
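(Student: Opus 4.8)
The plan is to take the sequence of covered-arrow reversals (interspersed with arrow deletions) from $\G_\pi$ to $\G_{\pi^\ast}$ produced by \cite[Theorem 15]{SWUM17} and already exploited in Lemma~\ref{lem: d-connecting paths}, and to check that, after choosing the linear extensions along it appropriately, every covered arrow reversed is in fact $\I$-covered. Since each move in that sequence either reverses a covered arrow (no change in the number of arrows) or deletes arrows (a strict decrease), the chain $\G_{\pi^0}\geq\G_{\pi^1}\geq\cdots\geq\G_{\pi^M}$ comes for free, so the entire task reduces to the $\I$-coveredness of each reversal.

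First I would recall, from the proof of Lemma~\ref{lem: d-connecting paths} and the structure of the APPLY-EDGE OPERATION algorithm of \cite{C02} on which it rests, that the linear extensions $\pi^m$ can be chosen so that whenever $\G_{\pi^m}$ is obtained from $\G_{\pi^{m-1}}$ by reversing the covered arrow $i\to j$, the nodes $i$ and $j$ are consecutive in $\pi^{m-1}$. With this choice, the set $S$ of predecessors of $i$ in $\pi^{m-1}$ equals $\{\text{predecessors of }j\text{ in }\pi^{m-1}\}\setminus\{i\}$, which is precisely the conditioning set used to test whether the arrow $i\to j$ belongs to the minimal I-MAP $(\G^k)_{\pi^{m-1}}$ of an interventional distribution $\PP^k$.

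Next, fix such a reversal $i\to j$ and a $k\in\I_{i\setminus j}$; if $\I_{i\setminus j}=\emptyset$ there is nothing to prove. Applying Lemma~\ref{lem: d-connecting equivalence} with $\G=\G_{\pi^{m-1}}$ (which is an independence map of $\G_{\pi^\ast}$) together with Lemma~\ref{lem: d-connecting paths}, every path in $\G_{\pi^\ast}$ that $d$-connects $i$ and $j$ given $S$ has its edge at $i$ oriented into $i$. Passing to the intervention DAG $\G_{\pi^\ast}^k$: because $i\in I_k$, node $i$ has no incoming arrows in $\G_{\pi^\ast}^k$; and because deleting arrows cannot create $d$-connections — any path active given $S$ in $\G_{\pi^\ast}^k$ is still a path active given $S$ in $\G_{\pi^\ast}$ (colliders only lose descendants, non-colliders are unchanged), hence oriented into $i$ at $i$ — no path can $d$-connect $i$ and $j$ given $S$ in $\G_{\pi^\ast}^k$. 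Faithfulness of $\PP^k$ to $\G_{\pi^\ast}^k$ then gives $i\independent j\mid S$ in $\PP^k$, i.e.\ $i\to j\notin(\G^k)_{\pi^{m-1}}$. Since $k\in\I_{i\setminus j}$ was arbitrary, the arrow $i\to j$ is $\I$-covered, which completes the argument.

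The step I expect to be the main obstacle is the one handling the conditioning set: matching the set appearing in the definition of an $\I$-covered arrow (the minimal-I-MAP test set, i.e.\ the predecessors of $j$ other than $i$) with the set $S$ to which Lemmas~\ref{lem: d-connecting equivalence} and~\ref{lem: d-connecting paths} apply. In the purely observational analysis one relies on $\G_{\pi^{m-1}}$ being an independence map of the target DAG, but $\G_{\pi^{m-1}}$ need not be an independence map of the pruned DAG $\G_{\pi^\ast}^k$, so one cannot reason about $\I$-coveredness directly inside $\G_{\pi^\ast}^k$. The remedy is to carry out all $d$-separation reasoning in $\G_{\pi^\ast}$, transport the resulting separation statement down to $\G_{\pi^\ast}^k$ using only monotonicity of $d$-connection under arrow deletion, and use the freedom in choosing the Chickering sequence's linear extensions to make $i$ and $j$ adjacent in $\pi^{m-1}$, so that the two conditioning sets coincide.
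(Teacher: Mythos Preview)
Your proposal is correct and follows the same approach as the paper: take the covered-arrow sequence from \cite[Theorem 15]{SWUM17}, invoke Lemma~\ref{lem: d-connecting paths} to conclude that every $d$-connecting path from $i$ to $j$ given $S$ in $\G_{\pi^\ast}$ is oriented into $i$, and then observe that for $k\in\I_{i\setminus j}$ the interventional DAG $\G_{\pi^\ast}^k$ has no incoming arrows at $i$, so no such path survives and $i\rightarrow j\notin(\G^k)_{\pi^{m-1}}$.

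Your write-up is in fact more careful than the paper's on one point. The paper takes $S=\parents_{\G_{\pi^{m-1}}}(i)$ and passes directly from ``$i$ and $j$ are $d$-separated given $S$ in $\G_{\pi^\ast}^k$'' to ``$i\rightarrow j\notin(\G^k)_{\pi^{m-1}}$'', leaving implicit why the parent set $S$ matches the conditioning set used in the minimal-I-MAP test (namely the predecessors of $j$ other than $i$). You make this explicit by exercising the freedom in the Chickering sequence to place $i$ and $j$ consecutively in $\pi^{m-1}$, so that these two sets coincide (via Lemma~\ref{lem: d-connecting equivalence}); and you spell out the monotonicity step showing that $d$-connection in the subgraph $\G_{\pi^\ast}^k$ implies $d$-connection in $\G_{\pi^\ast}$. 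These are exactly the details one would want to add to the paper's argument. One minor wording point: the implication ``$d$-separated in $\G_{\pi^\ast}^k$ $\Rightarrow$ $i\independent j\mid S$ in $\PP^k$'' uses the Markov direction, not faithfulness per se (though faithfulness of course includes Markov, so your statement is not wrong).
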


\begin{proof}
Suppose that $\G_{\pi^m}$ is produced from $\G_{\pi^{m-1}}$ via reversing the covered arrow $i\rightarrow j$ in $\G_{\pi^{m-1}}$ and let $S = \parents_{\G_{\pi^{m-1}}}(i)$.  
By Lemma~\ref{lem: d-connecting paths}, it must be that $i$ and $j$ are $d$-connected in $\G_{\pi^\ast}$ given $S$ only by paths for which the arrow incident to $i$ points towards $i$.  
It follows that for $k\in\I_{i\backslash j}$ there are no paths $d$-connecting $i$ and $j$ in $\G_{\pi^\ast}^k$.  
Therefore, $i\rightarrow j \notin \G_{\pi^{m-1}}$ for all $k\in\I_{i\backslash j}$; i.e., the arrow $i\rightarrow j$ is $\I$-covered in $\G_{\pi^{m-1}}$.  
\end{proof}

The previous theorem states that we can use only $\I$-covered arrow reversals to produce a sequence of I-MAPs taking us from any DAG $\G_\pi$ to the data-generating DAG $\G_{\pi^\ast}$.  
In the case that $\G_{\pi^{m-1}}$ and $\G_{\pi^m}$ are in different MECs it follows from the construction of such a sequence of minimal I-MAPs under the faithfulness assumption in the observational setting that $\G_{\pi^{m-1}}$ has strictly more arrows than $\G_{\pi^{m}}$.  
It remains to show that if $\G_{\pi^{m-1}}$ and $\G_{\pi^m}$ are in the true MEC then $\G_{\pi^{m-1}}$ has strictly more $\I$-contradicting arrows than $\G_{\pi^m}$ whenever they are not in the same $\I$-MEC and they have exactly the same $\I$-contradicting arrows when in the same $\I$-MEC.
This is the content of the following theorem.
\begin{theorem}
\label{thm: I-covered arrow reversals and different I-mecs}
Suppose that the distributions $\PP^1,\ldots,\PP^K$ are faithful to their respective interventional DAGs $\G_{\pi^\ast}^1,\ldots,\G_{\pi^\ast}^K$.  
For any permutation $\pi$ such that $\G_\pi$ and $\G_{\pi^\ast}$ are in the same MEC there exists a list of $\I$-covered arrow reversals from $\G_\pi$ to $\G_{\pi^\ast}$
$$
\G_\pi = \G_{\pi^0}\geq \G_{\pi^1}\geq \cdots \geq \G_{\pi^{m-1}} \geq \G_{\pi^m}\geq  \cdots\geq \G_{\pi^{M-1}} \geq \G_{\pi^M}=\G_{\pi^\ast}
$$  
such that, for all $m\in[M]$, either $\G_{\pi^{m-1}}$ and $\G_{\pi^{m}}$ are in the same $\I$-MEC or $\G_{\pi^{m}}$ is produced from $\G_{\pi^{m-1}}$ by the reversal of an $\I$-contradicting arrow.  
Moreover, the number of $\I$-contradicting arrows in $\G_{\pi^m}$ is strictly less than the number of $\I$-contradicting arrows in $\G_{\pi^{m-1}}$.  
\end{theorem}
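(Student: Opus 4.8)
The plan is to re-use the list of $\I$-covered arrow reversals produced by Theorem~\ref{thm: I-covered arrow reversal sequence} and to show that, under the additional hypothesis that $\G_\pi$ lies in the MEC of $\G_{\pi^\ast}$, that same list already has all the claimed properties. First I would note that Markov equivalent DAGs have the same number of arrows; since each step of the sequence $\G_\pi = \G_{\pi^0}\geq\cdots\geq\G_{\pi^M}=\G_{\pi^\ast}$ consists of a covered arrow reversal followed by possible edge deletions, the arrow count is non-increasing along it, and as $\G_\pi$ and $\G_{\pi^\ast}$ have equal arrow counts no deletion can occur. Hence every step is a single $\I$-covered arrow reversal, and every $\G_{\pi^m}$ remains in $[\G_{\pi^\ast}]$, in particular sharing its skeleton and immoralities.

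Second, fix the step reversing the $\I$-covered arrow $i\to j$ in $\G_{\pi^{m-1}}$. Because skeleton and immoralities are unchanged, \cite[Theorem~10]{HB12} says $\G_{\pi^{m-1}}$ and $\G_{\pi^m}$ are $\I$-Markov equivalent if and only if $\G_{\pi^{m-1}}^k$ and $\G_{\pi^m}^k$ have the same skeleton for every $k$. The reversal touches only the $i$--$j$ edge, whose presence in $\G^k$ amounts to $j\notin I_k$ before the reversal and to $i\notin I_k$ after; these disagree precisely when exactly one of $i,j$ lies in $I_k$, i.e. when $k\in\I_{i\setminus j}\cup\I_{j\setminus i}$. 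So if $\I_{i\setminus j}\cup\I_{j\setminus i}=\emptyset$ we stay in the same $\I$-MEC (the first alternative), and otherwise condition~(a) of Definition~\ref{def: I-contradicting arrows} holds and it remains to verify (b) and (c).

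Third -- the technical heart -- I would invoke Lemma~\ref{lem: d-connecting paths}: in $\G_{\pi^\ast}$ every $d$-connecting path from $j$ to $i$ given $\parents_{\G_{\pi^{m-1}}}(i)$ is incident to $i$ by an arrow pointing into $i$. Since the single-edge path on the $i$--$j$ edge is $d$-connecting given any set avoiding $i$ and $j$ (and $\parents_{\G_{\pi^{m-1}}}(i)$ contains neither), this forces the edge to be oriented $j\to i$ in $\G_{\pi^\ast}$. Now for $k\in\I_{i\setminus j}$ the node $i$ is a source in $\G_{\pi^\ast}^k$, so $j\to i$ is deleted there, and no directed path from $i$ to $j$ can survive in $\G_{\pi^\ast}^k\subseteq\G_{\pi^\ast}$, for together with $j\to i$ it would close a cycle; hence $i$ and $j$ are $d$-separated in $\G_{\pi^\ast}^k$ and faithfulness of $\PP^k$ to $\G_{\pi^\ast}^k$ gives $i\independent j$ in $\PP^k$, which is (b). For $k\in\I_{j\setminus i}$ the arrow $j\to i$ points into $i\notin I_k$, hence survives the intervention on $j$, so $i$ and $j$ are adjacent in $\G_{\pi^\ast}^k$ and therefore dependent in $\PP^k$ by faithfulness, which is (c). Thus $i\to j$ is $\I$-contradicting whenever the $\I$-MEC changes.

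Finally, for the count I would use that whether an arrow is $\I$-contradicting is a property only of its ordered endpoints, of the target family $\I$, and of the distributions $\PP^k$ -- not of the rest of the DAG -- so a single reversal can alter the number of $\I$-contradicting arrows only through the edge it reverses. A direct inspection of Definition~\ref{def: I-contradicting arrows} shows that once (a) holds, $i\to j$ and $j\to i$ cannot both be $\I$-contradicting: condition~(b) for $i\to j$ contradicts condition~(c) for $j\to i$ when $\I_{i\setminus j}\neq\emptyset$, and condition~(c) for $i\to j$ contradicts condition~(b) for $j\to i$ when $\I_{j\setminus i}\neq\emptyset$. Hence reversing the $\I$-contradicting arrow $i\to j$ removes exactly one $\I$-contradicting arrow and creates none, so the number of $\I$-contradicting arrows strictly decreases at each $\I$-MEC-changing step. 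The part I expect to demand the most care is pinning down the orientation $j\to i$ in $\G_{\pi^\ast}$ from Lemma~\ref{lem: d-connecting paths} and combining it with acyclicity to rule out a surviving directed $i$-to-$j$ path in the intervened graph; the remainder is bookkeeping with Definition~\ref{def: I-contradicting arrows} and \cite[Theorem~10]{HB12}.
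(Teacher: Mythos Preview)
Your proposal is correct and follows essentially the same approach as the paper: both use Lemma~\ref{lem: d-connecting paths} to pin down the orientation $j\to i$ in $\G_{\pi^\ast}$, then verify conditions (b) and (c) of Definition~\ref{def: I-contradicting arrows} via faithfulness, and finally observe that the reversed arrow cannot be $\I$-contradicting. Your treatment is in fact somewhat more explicit than the paper's in two places---the acyclicity argument ruling out a directed $i\to j$ path in $(\G_{\pi^\ast})^k$, and the mutual-exclusivity check that $i\to j$ and $j\to i$ cannot both satisfy (b) and (c)---but the logical skeleton is the same.
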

\begin{proof}
Suppose that $\G_{\pi^{m}}$ is produced from $\G_{\pi^{m-1}}$ by reversing the $\I$-covered arrow $i\rightarrow j$ in $\G_{\pi^{m-1}}$.  
If $\I_{i\backslash j}  = \I_{j\backslash i} = \emptyset$ then $\G_{\pi^{m-1}}$ and $\G_{\pi^{m}}$ are in the same $\I$-MEC and hence $i \rightarrow j$ is not an $\I$-contradicting arrow.

Otherwise, they must belong to different $\I$-MECs and we have that $\I_{i\backslash j}  \cup \I_{j\backslash i} \neq \emptyset$. 
Let $S = \parents_{\G_{\pi^{m-1}}}(i)$, by Lemma~\ref{lem: d-connecting paths}.  
It must be that $i$ and $j$ are $d$-connected in $\G_{\pi^\ast}$ given $S$ only by paths for which the arrow incident to $i$ points towards $i$.  
Since $\G_{\pi^{m-1}}$ is in the true MEC then $i$ and $j$ must be adjacent in $\G_{\pi^\ast}$.  
It then follows from Lemma~\ref{lem: d-connecting paths} that the arrow between $i$ and $j$ points to $i$.  
In other words, $j \rightarrow i \in \G_{\pi^\ast}$. 
In this case, for all $k \in \I_{j \backslash i}$ we have, under the faithfulness assumption, that $\PP^k$ satisfies $i \not\independent j$ since the arrow $j \rightarrow i$ in $\G_{\pi^\ast}$ is not deleted in the interventional DAG $(\G_{\pi^\ast})^k$.  
Similarly, for all $k\in\I_{i\setminus j}$ we know $i\independent j$ in $\PP^k$ since all $d$-connecting paths between $i$ and $j$ in the interventional DAG $(\G_{\pi^\ast})^k$ must be given by conditioning on some descendants of $i$.  
Thus, $i$ and $j$ are $d$-separated given $\emptyset$ in $(\G_{\pi^\ast})^k$, and it follows from the Markov assumption that $i\independent j$ in $(\G_{\pi^\ast})^k$.
Therefore, by Definitions~4.2 and~4.3, we know that $i\rightarrow j$ is an $\I$-covered arrow that is also $\I$-contradicting.
Furthermore, since the reversal of an $\I$-contradicting arrow makes it not $\I$-contradicting and the $\I$-contradicting arrows of $\G_{\pi^m}$ are contained in the $\I$-contradicting arrows of $\G_{\pi^{m-1}}$, it follows that $\G_{\pi^m}$ has strictly less $\I$-contradicting arrows than $\G_{\pi^{m-1}}$.
\end{proof}

\emph{Proof of Theorem~4.4}  
The proof of this theorem follows immediately from Theorem~\ref{thm: I-covered arrow reversal sequence} and Theorem~\ref{thm: I-covered arrow reversals and different I-mecs}.  
This is because Theorem~\ref{thm: I-covered arrow reversal sequence} implies that under the faithfulness assumption there is a sequence of $\I$-covered arrow reversals by which we can reach the true MEC, and Theorem~\ref{thm: I-covered arrow reversals and different I-mecs} implies that we then use $\I$-contradicting arrows to reach the true $\I$-MEC within the true MEC.  
Moreover, Theorem~\ref{thm: I-covered arrow reversals and different I-mecs} implies that the true $\I$-MEC will contain DAGs with the fewest $\I$-contradicting arrows.  
    \hfill$\square$

\section{Supplementary Material for Real Data Analysis}

This section contains supplementary information about the real data analysis. 
Table~\ref{table:cells_per_int} and Figure~\ref{fig:original_heatmap} present additional details about the perturb-seq experiments. 
Table~\ref{table:cells_per_int_protein} shows more details about the flow cytometry dataset. 
Table~\ref{table:edge_list} compares the results of IGSP and k-IGSP with other methods that allow latent confounders as applied to the Sachs et al.~\cite{Sachs_2005} dataset.
Figures~\ref{fig: perturb seq network} and~\ref{fig: protein signalling network} are our reconstructions of the causal gene network for the perturb-seq data set and the protein signaling network for the Sachs et al. dataset, respectively. 


\begin{figure}[h!]
\centering
\includegraphics[width=0.9\textwidth]{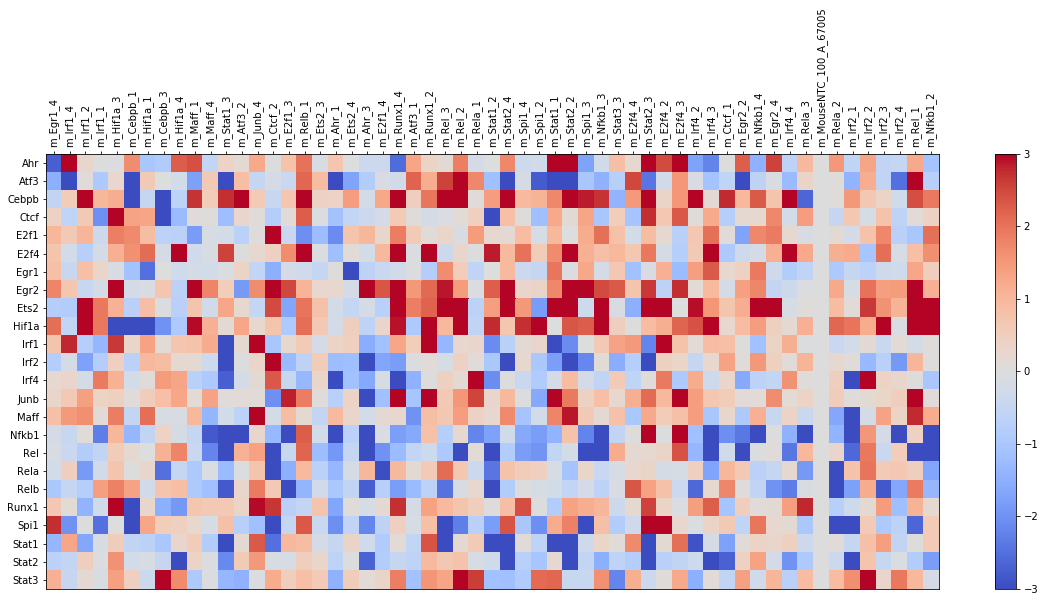}
\caption{
Heatmap of the true effects of each gene deletion on each measured gene. All 56 guide RNAs used in the experiment are listed on the x-axis and measured genes are listed on the y-axis. 18 of 56 guides, which target 8 genes in total, were selected for analysis because they were effective. Red (positive on q-value scale) indicate gene deletions that increase abundance of the measured gene. Blue (negative on q-value scale) indicate gene deletions that decrease abundance of the measured gene. White (zero on q-value scale) indicates no observed effect of gene deletion.
}
\label{fig:original_heatmap}
\end{figure}

\begin{figure}[h!]
\centering
\subfigure[Ground truth]{\includegraphics[width=0.45\textwidth]{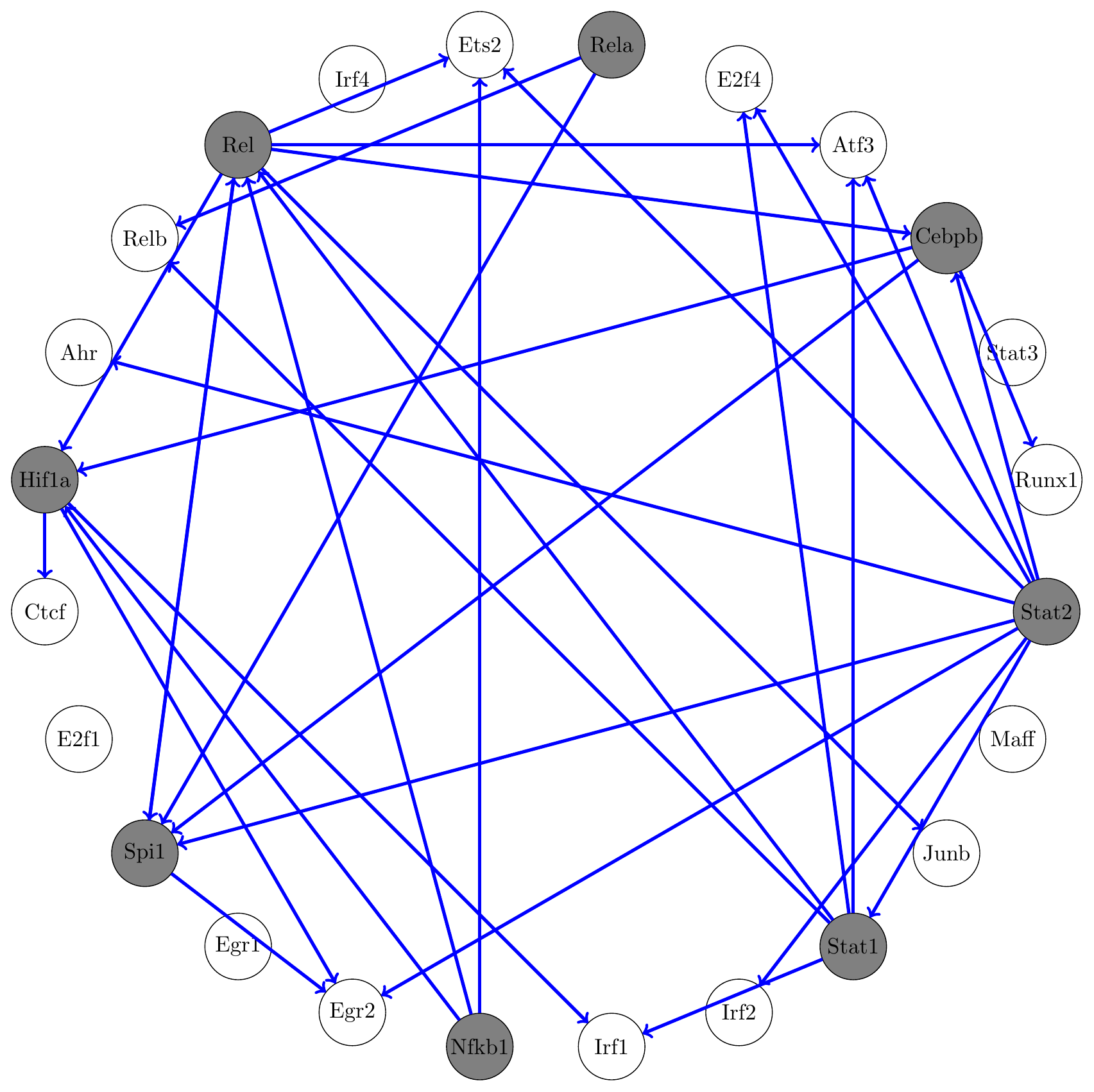}\label{fig:perturb_net1}}
\subfigure[Gaussian CI test]{\includegraphics[width=0.45\textwidth]{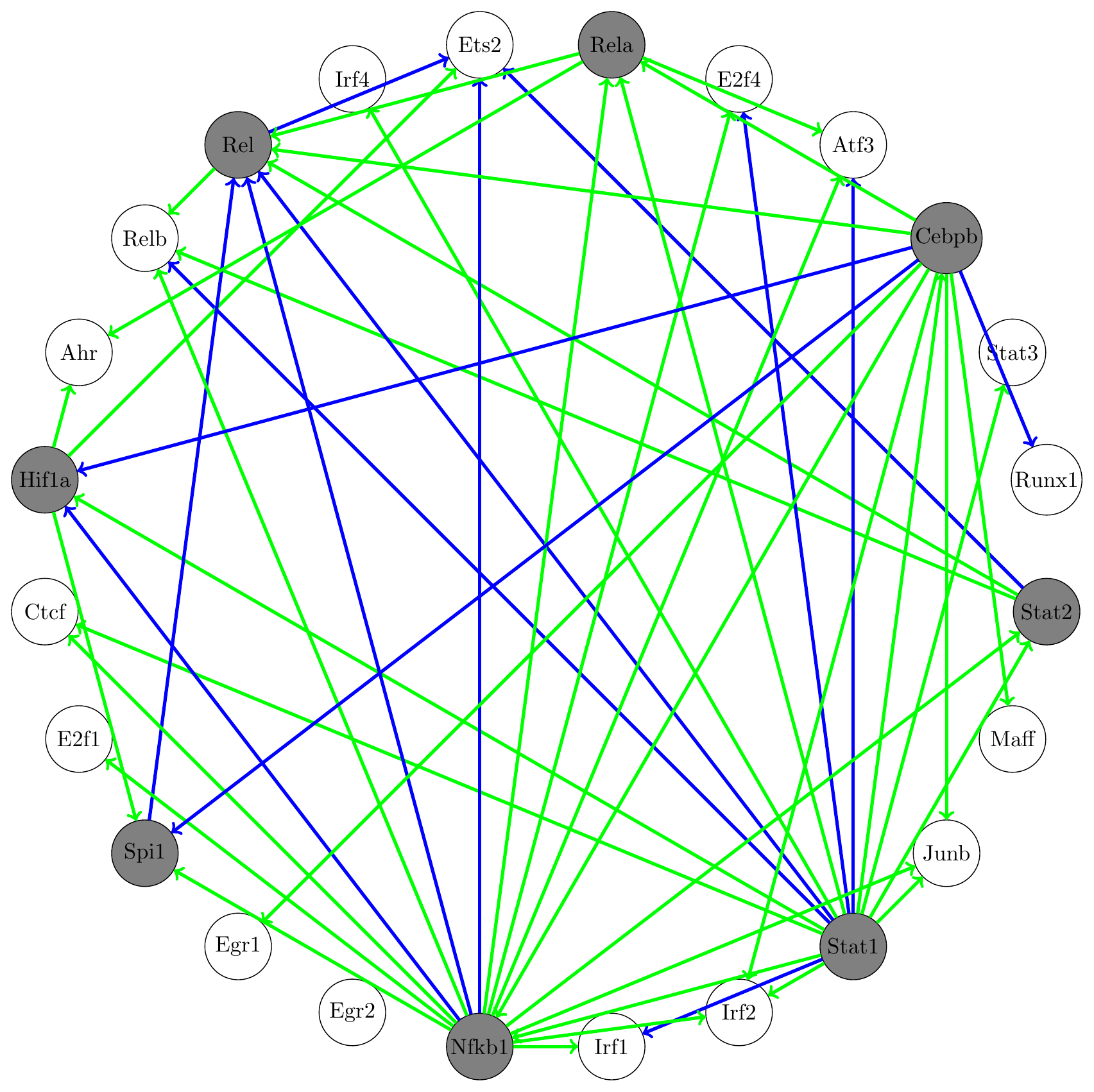}\label{fig:perturb_net2}}
\caption{Partial causal gene networks from perturb-seq data. Gray nodes represent genes receiving interventions (a) Ground truth partial causal network obtained from perturb-seq data, i.e. from thresholding the values in the heatmap from Figure 5(a). (b) Reconstruction of partial causal gene network using Algorithm~2 with Gaussian CI test (cut-off value $\alpha = 0.15$). Here, blue edges are the true positive edges, and green edges are the false positive edges.
}
	\label{fig: perturb seq network}
\end{figure}

\begin{table}[h!]
\centering
\caption{Number of samples under each gene deletion for processed perturb-seq dataset}
\label{table:cells_per_int}
\begin{tabular}{||c c c c c c c c c c||} 
 \hline
 Intervention: & None & Stat2 & Stat1 & Rel & Hif1a & Spi1 & Nfkb1 & Rela & Cebpb \\ [0.5ex] 
 \hline
 \# Samples: & 992 & 2426 & 3337 & 1513 & 301 & 796 & 3602 & 1068 & 392 \\ [1ex] 
 \hline
\end{tabular}
\end{table}

\vspace{0.12in}

\begin{table}[!h]
\centering
\caption{Number of samples under each protein intervention for flow cytometry dataset}
\label{table:cells_per_int_protein}
\begin{tabular}{||c c c c c c c||} 
 \hline
 Intervention: & None & Akt & PKC & PIP2 & Mek & PIP3 \\ [0.5ex] 
 \hline
 \# Samples: & 1755 & 911 & 723 & 810 & 799 & 848 \\ [1ex] 
 \hline
\end{tabular}
\end{table}

\vspace{0.12in}

\begin{table}[!h]
\centering
\caption{Interaction prediction results of IGSP and k-IGSP and other methods that allow for latent variables. Here the consensus network from~\cite{Sachs_2005} is denoted by~\cite{Sachs_2005}a and their reconstructed network by~\cite{Sachs_2005}b. For~\cite{Meinshausen} we provide results from both ICP and hidden ICP, denoted by~\cite{Meinshausen}a and~\cite{Meinshausen}b respectively. For IGSP and k-IGSP, we chose the standardly used significance level $\alpha=0.05$ as the cut-off for CI testing, which resulted in a similar number of predicted interactions as in~\cite{Meinshausen}.}
\label{table:edge_list}
\begin{tabular}{c| c c c c c c c c} 
\hline
Edge & ~\cite{Sachs_2005}a & ~\cite{Sachs_2005}b & ~\cite{Meinshausen}a & ~\cite{Meinshausen}b & ACI~\cite{MCM16} & COmbINE~\cite{TT15} & IGSP & k-IGSP \\
\hline
RAF $\to$ MEK  &  \cc  &  \cc  &   &  \cc  &   & \cc &  & \\
RAF $\to$ JNK &  &  &  &  &  &  &  &  \cc \\
MEK $\to$ RAF  &   &  &   &  \cc  &  \cc  & \cc &  \cc  &  \cc \\
MEK $\to$ ERK  &  \cc  &  \cc  &   &  &  \cc  &  &  & \\
MEK $\to$ AKT  &   &  &   &  &  \cc  &  &  & \\
MEK $\to$ JNK  &   &  &   &  &  \cc  &  &  & \\
PLCg $\to$ PIP2  &  \cc  &  \cc  &  \cc  &  \cc  &   & \cc &  & \\
PLCg $\to$ PIP3  &   &  \cc  &   &  &   & \cc &  \cc  & \\
PLCg $\to$ PKC  &  \cc  &  &   &  &   &  &  & \\
PIP2 $\to$ PLCg  &   &  &  \cc  &  &  \cc  & \cc &  &  \cc \\
PIP2 $\to$ PIP3  &   &  &   &  &   &  &  &  \cc \\
PIP2 $\to$ PKC  &  \cc  &  &   &  &   &  &  & \\
PIP3 $\to$ PLCg  &  \cc  &  &   &  &   & \cc &  &  \cc \\
PIP3 $\to$ PIP2  &  \cc  &  \cc  &  \cc  &  \cc  &  & \cc &  \cc  & \\
PIP3 $\to$ AKT  &  \cc  &  &   &  &  &  &  & \\
AKT $\to$ ERK  &   &  &  \cc  &  \cc  &  & \cc &  \cc  &  \cc \\
AKT $\to$ PKA &  &  &  &  &  & \cc &  \cc  & \\
AKT $\to$ JNK  &   &  &   &  &  \cc &  &  & \\
ERK $\to$ AKT  &   &  \cc  &  \cc  &  \cc  &  & \cc &  & \\
PKA $\to$ RAF  &  \cc  &  \cc  &   &  &  \cc  &  &  & \\
PKA $\to$ MEK  &  \cc  &  \cc  &   &  \cc  &  \cc &  &  & \\
PKA $\to$ ERK  &  \cc  &  \cc  &  \cc  &  &  \cc & \cc &  &  \cc \\
PKA $\to$ AKT  &  \cc  &  \cc  &   &  \cc  &  \cc &  &  &  \cc \\
PKA $\to$ PKC  &   &  &   &  &   &  &  & \\
PKA $\to$ P38  &  \cc  &  \cc  &   &  &  \cc &  &  & \\
PKA $\to$ JNK  &  \cc  &  \cc  &   &  &  \cc &  &  & \\
PKC $\to$ RAF  &  \cc  &  \cc  &   &  &  \cc &  &  & \\
PKC $\to$ MEK  &  \cc  &  \cc  &   &  &  \cc &  &  & \\
PKC $\to$ PLCg  &   &  &   &  &  \cc &  &  & \\
PKC $\to$ PIP2  &   &  &   &  &  \cc &  &  & \\
PKC $\to$ PIP3  &   &  &   &  &  \cc &  &  & \\
PKC $\to$ ERK   &   &  &   &  &  \cc &  &  & \\
PKC $\to$ AKT  &   &  &   &  &  \cc &  &  & \\
PKC $\to$ PKA  &   &  \cc  &   &  &  &  &  & \\
PKC $\to$ P38  &  \cc  &  \cc  &   &  \cc  &  \cc &  &  \cc  &  \cc \\
PKC $\to$ JNK  &  \cc  &  \cc  &  \cc  &  \cc  &  \cc &  &  \cc  &  \cc \\
P38 $\to$ JNK  &   &  &   &  \cc  &  & \cc &  &  \cc \\
P38 $\to$ PKC  &   &  &   &  \cc  &  &  &  & \\
JNK $\to$ PKC  &   &  &   &  \cc  &  &  &  & \\
JNK $\to$ P38  &   &  &   &  \cc  &  & \cc &  & \\
\hline 
\end{tabular}
\end{table}

\begin{figure}[h!]
\centering
\subfigure[Ground truth]{\includegraphics[width=0.32\textwidth]{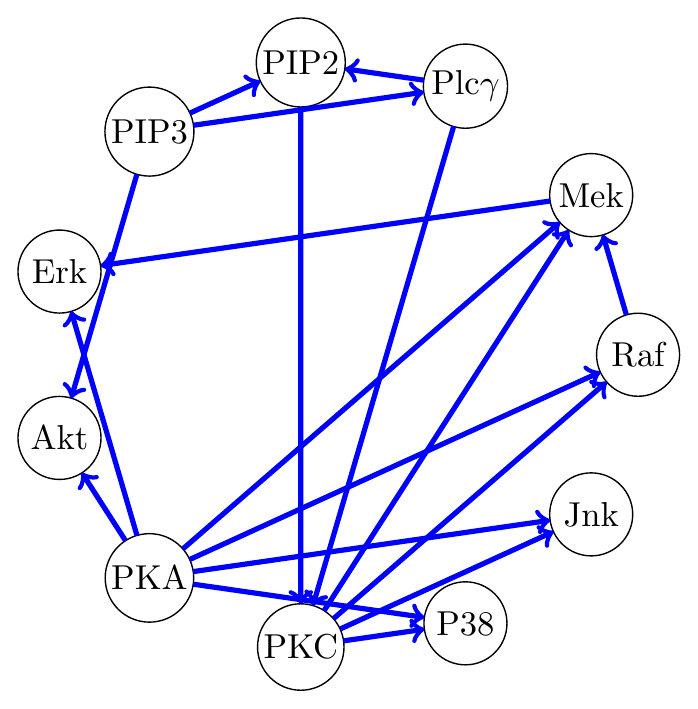}\label{fig:net1}}
\subfigure[Gaussian CI test]{\includegraphics[width=0.32\textwidth]{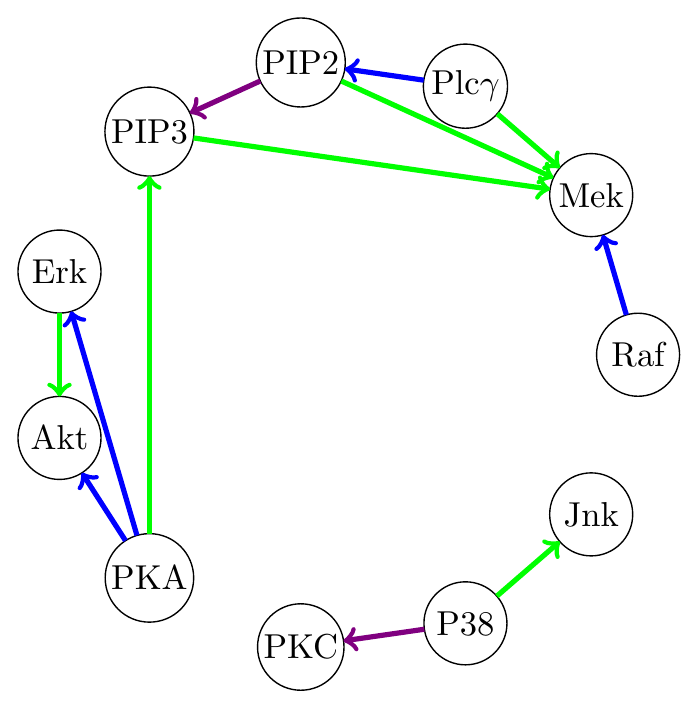}\label{fig:net2}}
\subfigure[Kernel CI test]{\includegraphics[width=0.32\textwidth]{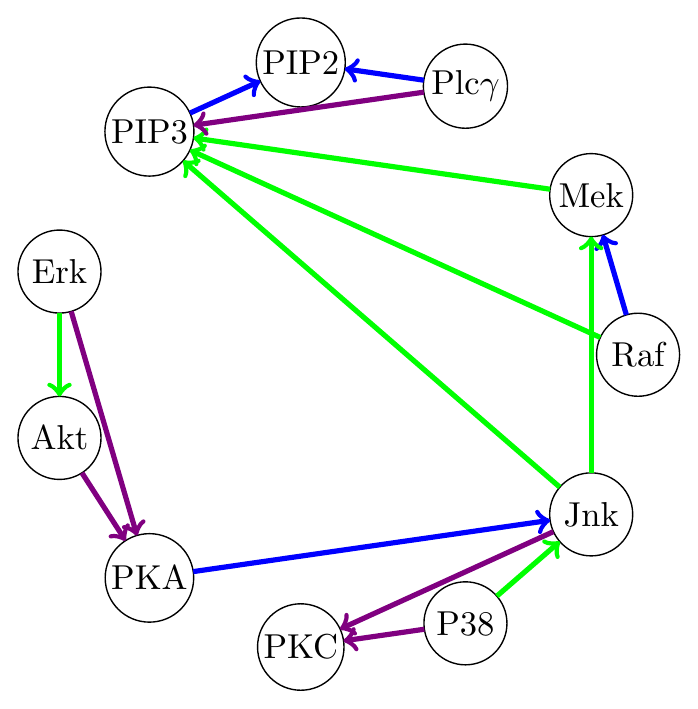}\label{fig:net3}}
\caption{Protein signaling network from flow cytometry data. (a) Ground truth network according to the conventionally accepted model from~\cite{Sachs_2005}. (b) Reconstruction of protein signaling network using Algorithm~2 with Gaussian CI test (cut-off value $\alpha = 0.005$). Here blue edges are the true positive edges; purple edges are the reversed edges that share the same skeleton as ground truth edges but the arrows are different; green edges are the false positives. (c) Reconstruction of protein signaling network using Algorithm~2 with kernel-based CI test (cut-off value $\alpha = 0.0001$). Here we choose different significance level cut-offs for kernel-based test and Gaussian test such that the number of true positive directed edges are the same.
}
\vspace*{5in}
	\label{fig: protein signalling network}
\end{figure}

\end{document}